\begin{document}

\theoremstyle{plain}
\newtheorem{theorem}{Theorem}
\newtheorem{lemma}[theorem]{Lemma}
\newtheorem{corollary}[theorem]{Corollary}
\newtheorem{conjecture}[theorem]{Conjecture}
\newtheorem{proposition}[theorem]{Proposition}
\newcommand{\PT}{\mathrm{PTL}}
 \newcommand{\C}{\mathbb{C}}
  \newcommand{\F}{\mathbb{F}}
  \newcommand{\N}{\mathbb{N}}
  \renewcommand{\P}{\mathbb{P}}
  \newcommand{\R}{\mathbb{R}}
  \newcommand{\Z}{\mathbf{Z}}
  \renewcommand{\a}{\mathbf{a}}
  \renewcommand{\b}{\mathbf{b}}
  \renewcommand{\i}{\mathbf{i}}
  \renewcommand{\j}{\mathbf{j}}
  \renewcommand{\c}{\mathbf{c}}
  \newcommand{\e}{\mathbf{e}}
  \newcommand{\f}{\mathbf{f}}
  \newcommand{\g}{\mathbf{g}}
  \newcommand{\gl}{\mathbf{GL}}
  \newcommand{\m}{\mathbf{m}}
  \newcommand{\n}{\mathbf{n}}
  \newcommand{\bNP}{\mathbf{NP}}
  \newcommand{\bNPC}{\mathbf{NPC}}
  \newcommand{\p}{\mathbf{p}}
  \newcommand{\bP}{\mathbb{P}}
  \newcommand{\bPo}{\mathbf{Po}}
  \newcommand{\q}{\mathbf{q}}
  \newcommand{\s}{\mathbf{s}}
  \newcommand{\bt}{\mathbf{t}}
  \newcommand{\T}{\mathbf{T}}
  \newcommand{\U}{\mathbf{U}}
  \renewcommand{\u}{\mathbf{u}}
  \renewcommand{\v}{\mathbf{v}}
  \newcommand{\V}{\mathbf{V}}
  \newcommand{\w}{\mathbf{w}}
  \newcommand{\W}{\mathbf{W}}
  \newcommand{\x}{\mathbf{x}}
  \newcommand{\X}{\mathbf{X}}
  \newcommand{\y}{\mathbf{y}}
  \newcommand{\Y}{\mathbf{Y}}
  \newcommand{\z}{\mathbf{z}}
  \newcommand{\0}{\mathbf{0}}
  \newcommand{\1}{\mathbf{1}}
  \newcommand{\Gam}{\mathbf{\Gamma}}
  \newcommand{\bGamma}{\Gam}
  \newcommand{\Lam}{\mathbf{\Lambda}}
  \newcommand{\lam}{\mbox{\boldmath{$\lambda$}}}
  \newcommand{\bA}{\mathbf{A}}
  \newcommand{\bB}{\mathbf{B}}
  \newcommand{\bC}{\mathbf{C}}
  \newcommand{\bH}{\mathbf{H}}
  \newcommand{\bL}{\mathbf{L}}
  \newcommand{\bM}{\mathbf{M}}
  \newcommand{\bc}{\mathbf{c}}
  \newcommand{\cA}{\mathcal{A}}
  \newcommand{\cB}{\mathcal{B}}
  \newcommand{\cC}{\mathcal{C}}
  \newcommand{\cD}{\mathcal{D}}
  \newcommand{\cE}{\mathcal{E}}
  \newcommand{\cF}{\mathcal{F}}
  \newcommand{\cG}{\mathcal{G}}
  \newcommand{\cH}{\mathcal{H}}
  \newcommand{\cI}{\mathcal{I}}
  \newcommand{\cK}{\mathcal{K}}
  \newcommand{\cL}{\mathcal{L}}
  \newcommand{\cM}{\mathcal{M}}
  \newcommand{\cO}{\mathcal{O}}
  \newcommand{\cP}{\mathcal{P}}
  \newcommand{\cR}{\mathcal{R}}
  \newcommand{\cS}{\mathcal{S}}
  \newcommand{\cT}{\mathcal{T}}
  \newcommand{\cU}{\mathcal{U}}
  \newcommand{\cV}{\mathcal{V}}
  \newcommand{\cW}{\mathcal{W}}
  \newcommand{\cX}{\mathcal{X}}
  \newcommand{\cY}{\mathcal{Y}}
  \newcommand{\cZ}{\mathcal{Z}}
  \newcommand{\rE}{\mathrm{E}}
  \newcommand{\rH}{\mathrm{H}}
  \newcommand{\rU}{\mathrm{U}}
  \newcommand{\Cp}{\mathrm{Cap\;}}
  \newcommand{\lan}{\langle}
  \newcommand{\ran}{\rangle}
  \newcommand{\an}[1]{\lan#1\ran}
  \def\diag{\mathop{{\rm diag}}\nolimits}
  \newcommand{\hs}{\hspace*{\parindent}}
  \newcommand{\cl}{\mathop{\mathrm{Cl}}\nolimits}
  \newcommand{\tr}{\mathop{\mathrm{Tr}}\nolimits}
  \newcommand{\Aut}{\mathop{\mathrm{Aut}}\nolimits}
  \newcommand{\argmax}{\mathop{\mathrm{arg\,max}}}
  \newcommand{\Eig}{\mathop{\mathrm{Eig}}\nolimits}
  \newcommand{\Gr}{\mathop{\mathrm{Gr}}\nolimits}
  \newcommand{\Fr}{\mathop{\mathrm{Fr}}\nolimits}
  \newcommand{\trans}{^\top}
  \newcommand{\opt}{\mathop{\mathrm{opt}}\nolimits}
  \newcommand{\per}{\mathop{\mathrm{perm}}\nolimits}
  \newcommand{\haff}{\mathrm{haf\;}}
  \newcommand{\perio}{\mathrm{per}}
  \newcommand{\conv}{\mathrm{conv\;}}
  \newcommand{\Cov}{\mathrm{Cov}}
  \newcommand{\inter}{\mathrm{int}}
  \newcommand{\dist}{\mathrm{dist}}
  \newcommand{\inn}{\mathrm{in}}
  \newcommand{\grank}{\mathrm{grank}}
  \newcommand{\mrank}{\mathrm{mrank}}
  \newcommand{\krank}{\mathrm{krank}}
  \newcommand{\out}{\mathrm{out}}
  \newcommand{\orient}{\mathrm{orient}}
  \newcommand{\Pu}{\mathrm{Pu}}
  \newcommand{\rdc}{\mathrm{rdc}}
  \newcommand{\range}{\mathrm{range\;}}
  \newcommand{\Sing}{\mathrm{Sing\;}}
  \newcommand{\topo}{\mathrm{top}}
  \newcommand{\undir}{\mathrm{undir}}
  \newcommand{\Var}{\mathrm{Var}}
  \newcommand{\rC}{\mathrm{C}}
  \newcommand{\rF}{\mathrm{F}}
  \newcommand{\rL}{\mathrm{L}}
  \newcommand{\rM}{\mathrm{M}}
  \newcommand{\rO}{\mathrm{O}}
  \newcommand{\rR}{\mathrm{R}}
  \newcommand{\rS}{\mathrm{S}}
  \newcommand{\rT}{\mathrm{T}}
  \newcommand{\pr}{\mathrm{pr}}
  \newcommand{\inte}{\mathrm{int}}
  \newcommand{\inv}{\mathrm{inv}}
  \newcommand{\pers}{\per_s}
  \newcommand{\del}{\boldsymbol{\delta}}
  \renewcommand{\alph}{\boldsymbol{\alpha}}
  \newcommand{\bet}{\boldsymbol{\beta}}
  \newcommand{\gam}{\boldsymbol{\gamma}}
  \newcommand{\sig}{\boldsymbol{\sigma}}
  \newcommand{\zet}{\boldsymbol{\zeta}}
  \newcommand{\et}{\boldsymbol{\eta}}
  \newcommand{\xit}{\boldsymbol{\xi}}
  \newcommand{\perm}{\mathrm{perm\;}}
  \newcommand{\adj}{\mathrm{adj\;}}
  \newcommand{\rank}{\mathrm{rank\;}}
  \newcommand{\set}[1]{\{#1\}}
  \newcommand{\spec}{\mathrm{spec\;}}
  \newcommand{\supp}{\mathrm{supp\;}}
  \newcommand{\Tr}{\mathrm{Tr\;}}
  \newcommand{\vol}{\text{vol}}
  \newcommand{\be}{\begin{equation}}
  \newcommand{\ee}{\end{equation}}
   \newcommand{\nn}{\nonumber\\}

\theoremstyle{definition}
\newtheorem{definition}{Definition}

\theoremstyle{remark}
\newtheorem*{remark}{Remark}
\newtheorem{example}{Example}

\title{The minimum Renyi entropy output of a quantum channel is locally additive}

\author{Gilad Gour}\thanks{Supported by NSERC}\email{gour@ucalgary.ca}
\affiliation{Institute for Quantum Science and Technology, and
Department of Mathematics and Statistics,
University of Calgary, 2500 University Drive NW,
Calgary, Alberta, Canada T2N 1N4}

\author{Todd Kemp}\thanks{Supported by NSF CAREER Award DMS-1254807}\email{tkemp@math.ucsd.edu}
\affiliation{Department of Mathematics, 
University of California, San Diego
9500 Gilman Drive,
La Jolla, CA  92093-0112}

\begin{abstract}
We show that the minimum Renyi entropy output of a quantum channel is locally additive for Renyi parameter $\alpha>1$. While our work extends the results of~\cite{GF} (in which local additivity was proven for $\alpha=1$), 
it is based on several new techniques that incorporate the \emph{multiplicative} nature of $\ell_p$-norms, in contrast to the \emph{additivity} property of the von-Neumann entropy.
Our results demonstrate that the counterexamples to the Renyi additivity conjectures exhibit purely global effects of quantum channels.  
Interestingly, the approach presented here can not be extended to Renyi entropies with parameter $\alpha<1$.
\end{abstract}

\maketitle

\section{Introduction}

One of the most fundamental questions in quantum information theory concerns the ability to send information over a noisy quantum communication channel~\cite{Holevo06,c1,c2,c3,c4,c5,c6,c7,c8,c9,Has09,Yard08,Brandao,Fuk10,Bra11}. 
Unlike classical channels, quantum channels exhibits an unintuitive phenomenon in which the optimal rate to transmit reliably classical or quantum information is not additive under taking tensor products of two (or more) quantum channels~\cite{Has09,Yard08}. The question of whether the product state classical capacity (i.e. Holevo capacity) is additive or not was an open problem for more than a decade and was shown by Shor~\cite{c6} to be equivalent to three other additivity conjectures; namely, the additivity of entanglement of formation, the strong super-additivity of entanglement of formation, and the additivity of the minimum entropy output of a quantum channel. 

The discovery that all these quantities are not additive~\cite{Has09} left with it key problems in the field wide open. One such problem is how much entanglement between input signal states is needed to violate additivity. A partial answer for that was given in recent work~\cite{GF,FGA}, where it was shown that the minimum entropy output of a quantum channel is locally additive. This result indicated that while entanglement is needed to violate additivity, arbitrarily small amount of entanglement will not be sufficient. Here we extend this result by showing that the minimum Renyi output entropies with parameter $\alpha$ greater than 1 are also locally additive. However, unlike the von-Neumann entropy ($\alpha=1$), for $\alpha>1$ the Renyi entropies are given in terms of the log of the $\alpha$-norms (also known as $\ell_p$-norms, where for notational convenience we rename $p=\alpha$ presently) which are multiplicative under tensor product. Therefore, in order to prove local multiplicativity of the output $\alpha$-norms of a quantum channel, it is not possible to use exactly the same techniques as those used in the case $\alpha=1$ since the latter relied heavily on the additive nature of the von-Neumann entropy. 

This paper is organized as follows. In Section~\ref{preli} we provide the precise definition of local additivity, including notations and preliminaries that will be used in the subsequent sections. Section~\ref{sec:main} is devoted to the statement and proof of the main result. The proof of the main result is based on 3 substantial lemmas that will be proved in Sections~\ref{sec:der},~\ref{sec:dir}, and~\ref{sec:add}. Finally, in Section~\ref{sec:concluding} we end with a few concluding remarks.

\section{Notations and Preliminaries}\label{preli}

Quantum channels are described in terms of completely-positive trace preserving linear maps (CPT maps).
A CPT map $\mathcal{N}: H_{d_{\rm in}}\to H_{d_{\rm out}}$ takes the set of $d_{\rm in}\times d_{\rm in}$ Hermitian matrices 
$H_{d_{\rm in}}$ to a subset of the set of all $d_{\rm out}\times d_{\rm out}$ Hermitian matrices
$H_{d_{\rm out}}$. Any finite dimensional quantum channel can be characterized in terms of a unitary embedding followed by a partial trace (the Stinespring dilation theorem): for any CPT map $\mathcal{N}$ there exists an ancillary space of Hermitian matrices $H_{E}$ such that
$$
\mathcal{N}(\rho)=\tr_{E}\left[U(\rho\otimes |0\rangle_{E}\langle 0|) U^{\dag}\right]
$$
where $\rho\in H_{d_{\rm in}}$ and $U$ is a unitary matrix mapping states $|\psi\rangle|0\rangle_E$ with
$|\psi\rangle\in H_{d_{\text{in}}}$ to 
$H_{d_{\text{out}}}\otimes H_{E}$.
 
For $\alpha\ge0$, the minimum $\alpha$-Renyi entropy output of a quantum channel $\mathcal{N}$ is defined by
\be\label{minim}
S_{\alpha}^{\min}(\mathcal{N})\equiv\min_{\rho\in H_{d_{\text{in}},+,1}}S_\alpha\left(\mathcal{N}(\rho)\right)\;,
\ee
where $H_{d_{\text{in}},+,1}\subset H_{d_{\text{in}}}$ is the set of all $d_{\rm in}\times d_{\rm in}$ positive semi-definite matrices with trace $=1$ (i.e. density matrices), and $$S_\alpha(\rho)\equiv\frac{1}{1-\alpha}\log\tr(\rho^\alpha)$$ is the $\alpha$-Renyi entropy with $0\leq \alpha\leq \infty$, where for $\alpha=0,1,\infty$ the Renyi entropies are defined in terms of the limits. For $0\leq \alpha\leq 1$ the Renyi entropies are concave in $\rho$, and therefore it follows that the minimization can be taken over all rank one matrices $\rho=|\psi\rangle\langle\psi|$ in $H_{d_{\text{in}},+,1}$. While for $\alpha>1$ the Renyi entropy is not concave in general (only Schur concave),  we can still take the minimization over rank-1 matrices since the $\alpha$-norm $\|\rho\|_\alpha=\left[\tr(\rho^\alpha)\right]^{1/\alpha}$ is convex for $\alpha\geq 1$.  To see why, note that for $\rho=\sum_jp_j|\psi_j\rangle\langle\psi_j|$, we have $\|\mathcal{N}(\rho)\|_\alpha\leq\sum_jp_j \|\mathcal{N}(|\psi_j\rangle\langle\psi_j|)\|_\alpha$, so that
\begin{align*}
S_{\alpha}\left(\mathcal{N}(\rho)\right)& =\frac{\alpha}{1-\alpha}\log\|\mathcal{N}(\rho)\|_\alpha
\geq \frac{\alpha}{1-\alpha}\log\sum_jp_j \|\mathcal{N}(|\psi_j\rangle\langle\psi_j|)\|_\alpha\\
& \geq\frac{\alpha}{1-\alpha}\log\max_{j}\|\mathcal{N}(|\psi_j\rangle\langle\psi_j|)\|_\alpha
=\min_jS_\alpha\left(\mathcal{N}(|\psi_j\rangle\langle\psi_j|)\right)\;.
\end{align*}
Therefore, for all $0\leq\alpha\leq\infty$ the minimum in~\eqref{minim} can be taken over all rank 1 matrices in $H_{d_{\text{in}},+,1}$.

For any such rank one density matrix $\rho=|\psi\rangle\langle\psi|$ we can define a bipartite pure state
$|\Psi\rangle=U|\psi\rangle|0\rangle_{E}$ in the bipartite subspace $\mathcal{K}\equiv \{|\Psi\rangle\big|\;|\psi\rangle\in H_{d_{\rm in}}\}$. Therefore, the minimum
Renyi entropy output of the channel $\mathcal{N}$ can be expressed in terms of the Renyi entanglement of the bipartite subspace  $\mathcal{K}$ defined by 
$$
E_\alpha(\mathcal{K})\equiv\min_{|\phi\rangle\in\mathcal{K}\;,\;\|\phi\|=1}E_\alpha(|\phi\rangle)\;,
$$
where $E_\alpha(|\phi\rangle)\equiv S_\alpha\left(\tr_{E}(|\phi\rangle\langle\phi|)\right)$ is the Renyi entropy of entanglement. In~\cite{GN} it was pointed out that $E_\alpha(\mathcal{K})=0$ unless $\dim\mathcal{K}\leq (d_{\rm out}-1)(\dim H_{E}-1)$. This claim follows directly from the fact that the number of (bipartite) states in an unextendible product basis is at least $d_{\rm out}+\dim H_{E}-1$, cf.\ \cite{Ben99}.

With these notations, the non-additivity of the minimum Renyi entropy output of a quantum channel is equivalent to 
the existence of two subspaces $\mathcal{K}_1\subset\mathbb{C}^{n_1}\otimes\mathbb{C}^{m_1}$ and $\mathcal{K}_2\subset\mathbb{C}^{n_2}\otimes\mathbb{C}^{m_2}$ such that
$$
E_\alpha(\mathcal{K}_1\otimes\mathcal{K}_1)<E_\alpha(\mathcal{K}_1)+E_\alpha(\mathcal{K}_2)\;.
$$

\subsection*{Local Minimum/Maximum}\label{local}

Let $\mathcal{K}\subset\mathbb{C}^{n}\otimes\mathbb{C}^{m}$ be a subspace of bipartite entangled states.
Since the bipartite Hilbert space $\mathbb{C}^{n}\otimes\mathbb{C}^{m}$ is isomorphic to the Hilbert space
of all $n\times m$ complex matrices $\mathbb{C}^{n\times m}$, we can view any bipartite state 
$|\psi\rangle^{AB}=\sum_{i,j}x_{ij}|i\rangle|j\rangle$ in $\mathcal{K}$ as an $n\times m$ matrix $x$.
The reduced density matrix of $|\psi\rangle^{AB}$ is then given by $\rho_r\equiv\tr_{B}|\psi\rangle^{AB}\langle\psi|=xx^*$,
and the $\alpha$-Renyi entropy of entanglement of $|\psi\rangle^{AB}$ is given by
\begin{equation}\label{entropy}
E_\alpha(x)\equiv\frac{1}{1-\alpha}\log\tr[(xx^*)^\alpha]\;.
\end{equation}
In our notations, instead of using a dagger, we use $x^*$ to denote the hermitian conjugate of the matrix $x$.

Since the $\log$ function is continuous and monotonic, instead of showing that $E_\alpha$ is locally additive for $\alpha>1$, we will show that 
$$
Q_\alpha(x)=\tr[(xx^*)^\alpha]
$$
is locally multiplicative.

If $x\in\mathcal{K}$ is a local minimum of $E_\alpha$ in $\mathcal{K}$ (i.e. $x\in\mathcal{K}$ is a local \emph{maximum} of $Q_\alpha$ in $\mathcal{K}$), then there exists a neighbourhood of $x$ in $\mathcal{K}$
such that $x$ is the minimum in that neighbourhood. Any state in a neighbourhood of $x$ can be written as
$ax+by$, where $a,b\in\mathbb{C}$ and $y\in\mathcal{K}$ is a matrix orthogonal to $x$; i.e. $\tr(xy^*)=0$.
We also assume that the state is normalized so that $|a|^2+|b|^2=1$.
Now, since the function $E_\alpha(x)$ (or $Q_\alpha(x)$) is independent of a global phase, we can assume that $a$ is a positive \emph{real} number.
We can also assume that $b$ is real since we can absorb its phase into $y$ (adding a phase to $y$ will not change its orthogonality to $x$). Thus, any normalized state in a neighbourhood of $x$ can be written as
$$
\frac{x+ty}{\sqrt{1+t^2}}\;\;\text{with}\;\;\tr(xy^*)=0\;,
$$
where $t\equiv b/a$ is a small real number and $y$ is normalized (i.e. $\tr(yy^*)=1$). 

\begin{definition}\label{maindef}
$\;$\\
\textbf{(a)} A matrix $x\in\mathcal{K}$ is said to be a critical point of $Q_\alpha(x)$ in $\mathcal{K}$ if
$$
D_{y}Q_\alpha(x)\equiv \frac{d}{dt}Q_\alpha\left(\frac{x+ty}{\sqrt{1+t^2}}\right)\Big|_{t=0}=0\;\;\;\forall\;y\in x^{\perp}
$$
where the notation $D_{y}Q_\alpha(x)$ indicates that we are taking the directional derivative of $Q_\alpha$ in the direction of $y$,
and $x^\perp\subset\mathcal{K}$ denotes the subspace of all the matrices $y$ in $\mathcal{K}$ for which
$\tr(xy^*)=0$.\\
\textbf{(b)}  A matrix $x\in\mathcal{K}$ is said to be a non-degenerate local maximum of $Q_\alpha(x)$ in $\mathcal{K}$ if it is critical and
$$
D_{y}^{2}Q_\alpha(x)\equiv \frac{d^2}{dt^2}Q_\alpha\left(\frac{x+ty}{\sqrt{1+t^2}}\right)\Big|_{t=0}<0\;\;\;\forall\;y\in x^{\perp}.
$$ 
(This is a maximum for $Q_\alpha$, which gives a minimum for $E_\alpha$, since $\frac{1}{1-\alpha}<0$.) Moreover, a critical $x\in\mathcal{K}$ is said to be degenerate if there exists at least one direction $y$ such that
$D_{y}^{2}Q_\alpha(x)=0$.
\end{definition}

To be clear: {\em local additivity} of $E_\alpha$ is the statement that if $x^A$ and $x^B$ are local minima for $E_\alpha$ in two subspaces $\mathcal{K}^A$ and $\mathcal{K}^B$, then $x^A\otimes x^B$ is a local minimum for $E_\alpha$ in $\mathcal{K}^A\otimes\mathcal{K}^B$.  We make this completely precise (also considering a subtle point of degenerate local minima) in Theorem \ref{thm.main} below.

In our calculations we will assume that $x$ is diagonal (or equivalently, the bipartite state $x$ that represents is given in its 
Schmidt form). This assumption results in no loss of generality, due to 
the singular value decomposition theorem; namely, we can always find unitary matrices $u\in\mathbb{C}^{n\times n}$ and $v\in\mathbb{C}^{m\times m}$ such that $uxv$ is an $n\times m$ diagonal matrix with non-negative real numbers (the singular values of $x$) on the diagonal. Since $E_\alpha(x)=E_\alpha(uxv)$, we can assume without loss of generality that $x$ is a diagonal matrix.  

\section{Main Results}\label{sec:main}

In this section we state and prove the main result of this paper.
The proof is based on 3 lemmas that will be proved in 3 subsequent sections.

\begin{theorem} \label{thm.main}
Let $\mathcal{K}^{A}$ and  $\mathcal{K}^{B}$ be two subspaces of $n_1\times m_1$ and $n_2\times m_2$ complex matrices, respectively. Let $x^A$ and $x^B$ be two normalized complex matrices in $\mathcal{K}^A$ and $\mathcal{K}^B$, respectively. Then, for $\alpha>1$:\\

{\rm \bf{(a)}} If $x^A$ and $x^B$ are non-degenerate local minima of $E_\alpha$ in $\mathcal{K}^{A}$ and  $\mathcal{K}^{B}$, respectively, then $x^A\otimes x^B$ is a non-degenerate local minimum of $E_\alpha$ in $\mathcal{K}^{A}\otimes\mathcal{K}^{B}$.\\

{\rm \bf{(b)}} If $x^A$ and $x^B$ are local minima of $E_\alpha$ in $\mathcal{K}^{A}$ and  $\mathcal{K}^{B}$, with at least one of them being non-degenerate, then $x^A\otimes x^B$ is a local minimum of $E_\alpha$ in $\mathcal{K}^{A}\otimes\mathcal{K}^{B}$.
\end{theorem}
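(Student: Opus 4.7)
Since $E_\alpha = \frac{1}{1-\alpha}\log Q_\alpha$ with $\frac{1}{1-\alpha}<0$ for $\alpha>1$, the theorem reduces to showing that $x^A\otimes x^B$ is a (non-degenerate) local maximum of $Q_\alpha$ in $\mathcal{K}^A\otimes\mathcal{K}^B$. The plan is to work with the normalized path $t\mapsto(x^A\otimes x^B + tY)/\sqrt{1+t^2}$ for $Y\in(x^A\otimes x^B)^\perp\subset\mathcal{K}^A\otimes\mathcal{K}^B$. The key structural input is the orthogonal decomposition
\[(x^A\otimes x^B)^\perp = \bigl(x^A\otimes(x^B)^\perp\bigr)\oplus\bigl((x^A)^\perp\otimes x^B\bigr)\oplus\bigl((x^A)^\perp\otimes(x^B)^\perp\bigr),\]
where the $\perp$'s are taken inside $\mathcal{K}^A$ and $\mathcal{K}^B$. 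Every such $Y$ decomposes as $Y = x^A\otimes y^B + y^A\otimes x^B + Z$, and I would further expand $Z$ via its Schmidt decomposition $Z=\sum_k \sigma_k\, y^A_k\otimes y^B_k$ with $y^A_k\perp x^A$, $y^B_k\perp x^B$ orthonormal in each factor.

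In the two ``aligned'' directions, multiplicativity does the work: the normalized path $(x^A\otimes x^B + t\,x^A\otimes y^B)/\sqrt{1+t^2}$ factors exactly as $x^A\otimes(x^B+ty^B)/\sqrt{1+t^2}$, so
\[Q_\alpha\bigl(x^A\otimes (x^B+ty^B)/\sqrt{1+t^2}\bigr)=Q_\alpha(x^A)\cdot Q_\alpha\bigl((x^B+ty^B)/\sqrt{1+t^2}\bigr).\]
This reduces the first and second directional derivatives in the direction $x^A\otimes y^B$ to $Q_\alpha(x^A)\cdot D^{(k)}_{y^B}Q_\alpha(x^B)$ for $k=1,2$: these are $0$ (criticality of $x^B$) and strictly negative (non-degeneracy of $x^B$), and the $y^A\otimes x^B$ direction is symmetric. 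For the mixed direction, first-order criticality is automatic: writing $\rho_0=(x^A(x^A)^*)\otimes(x^B(x^B)^*)$, the standard formula
\[D_Z Q_\alpha(x^A\otimes x^B)=2\alpha\,\mathrm{Re}\,\tr\bigl[\rho_0^{\alpha-1}(x^A\otimes x^B) Z^*\bigr]\]
factors on each Schmidt term into a product $\tr[(x^A(x^A)^*)^{\alpha-1}x^A(y^A_k)^*]\cdot\tr[(x^B(x^B)^*)^{\alpha-1}x^B(y^B_k)^*]$, each of whose factors vanishes by the one-sided criticality conditions satisfied by $x^A$ and $x^B$. Hence $x^A\otimes x^B$ is a critical point of $Q_\alpha$.

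The main obstacle is the second-order analysis of the quadratic form $D^2_Y Q_\alpha(x^A\otimes x^B)$ on $Y=Y_1+Y_2+Y_3$: beyond the diagonal contributions handled above there are three cross terms $D^{(1,1)}(Y_i,Y_j)$, and once the direction leaves simple-tensor form, multiplicativity no longer factorizes the expression. The challenge is to open up the Daleckii--Krein/divided-difference expansion
\[\tfrac{d^2}{dt^2}\tr[\rho(t)^\alpha]\big|_{t=0}=2\alpha\,\tr[\rho_0^{\alpha-1}\rho_2]+\alpha\sum_{i,j}\frac{\lambda_i^{\alpha-1}-\lambda_j^{\alpha-1}}{\lambda_i-\lambda_j}\,|(\rho_1)_{ij}|^2\]
relative to the tensor-product spectrum of $\rho_0$ (whose eigenvalues split as products $\lambda^A_i\lambda^B_j$) and to verify that (i) the three cross terms vanish by orthogonality and one-sided criticality, and (ii) $D^2_Z Q_\alpha<0$ whenever $Z\ne 0$. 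This is what I expect the three technical lemmas in Sections~\ref{sec:der},~\ref{sec:dir}, and~\ref{sec:add} to accomplish in turn: the explicit derivative formula, the interaction of the Schmidt components of $Z$ with the product spectrum, and the final assembly showing negative definiteness; the convexity of the $\alpha$-norm for $\alpha>1$ (highlighted in the abstract) should be the structural ingredient that forces the correct sign and that breaks for $\alpha<1$. For part (b), if say $x^B$ is degenerate with a degenerate direction $y^B$, then $D^2_{x^A\otimes y^B}Q_\alpha=0$, but along that direction the path is literally $x^A\otimes\gamma^B(t)$; multiplicativity together with the fact that $x^B$ is still a local maximum of $Q_\alpha$ in $\mathcal{K}^B$ yields $Q_\alpha(x^A\otimes\gamma^B(t))\leq Q_\alpha(x^A\otimes x^B)$ for small $t$, which combined with the strict negativity in the remaining directions (from non-degeneracy of $x^A$ on $y^A\otimes x^B$ and the $Z$-analysis) gives a (possibly degenerate) local maximum at $x^A\otimes x^B$.
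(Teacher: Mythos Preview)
Your structural outline matches the paper closely: the orthogonal decomposition of $(x^A\otimes x^B)^\perp$ into three pieces, the treatment of the two ``aligned'' directions by exact multiplicativity, the vanishing of the cross terms in the Hessian via criticality, and the recognition that the whole difficulty sits in the block $Z\in(x^A)^\perp\otimes(x^B)^\perp$. Criticality of the tensor product is handled exactly as you describe.

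Where there is a genuine gap is your account of why $D^2_Z Q_\alpha(x^A\otimes x^B)<0$. You attribute this to ``convexity of the $\alpha$-norm for $\alpha>1$''; that is not the mechanism, and the paper's concluding remarks explain why a sign argument alone cannot work (the analogous statement is false over the reals). Three specific ingredients are missing from your plan:
\begin{itemize}
\item The paper does \emph{not} use your Daleckii--Krein formula in the variable $\rho(t)$. Instead it linearizes via the $2n\times 2n$ Hermitian matrix $X=\begin{pmatrix}0&x\\x^*&0\end{pmatrix}$, so that the perturbation is linear, and then splits $y=w+iz$ into Hermitian parts. This produces \emph{two} distinct Hadamard-product operators $\Phi_\rho^{-}$ (acting on $w$) and $\Phi_\rho^{+}$ (acting on $z$), with coefficients $\frac{p_j^{\alpha-1/2}\mp p_k^{\alpha-1/2}}{p_j^{1/2}\mp p_k^{1/2}}$. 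Your formula in terms of $(\rho_1)_{ij}$ does not separate these and would not lead to the needed inequality.
\item The complex structure is used explicitly: from the hypothesis one knows $D^2_{y^A}Q_\alpha(x^A)<0$ \emph{and} $D^2_{iy^A}Q_\alpha(x^A)<0$. Averaging these two collapses $\Phi^{-}$ and $\Phi^{+}$ into the single symmetric operator with coefficients $\frac{p_j^\alpha-p_\ell^\alpha}{p_j-p_\ell}$, yielding the clean bound $\tr[y^{A*}(\Phi^-_{\rho^A}+\Phi^+_{\rho^A})(y^A)]<\tr[(\rho^A)^\alpha]$. This step is precisely what fails over $\mathbb{R}$ and is not implied by norm convexity.
\item The heart of the argument is then the pointwise scalar inequality
\[
\frac{(p_jq_k)^{\alpha-1/2}\pm(p_\ell q_m)^{\alpha-1/2}}{(p_jq_k)^{1/2}\pm(p_\ell q_m)^{1/2}}\;\le\;\frac{p_j^\alpha-p_\ell^\alpha}{p_j-p_\ell}\cdot\frac{q_k^\alpha-q_m^\alpha}{q_k-q_m},
\]
equivalently $\Phi^{\pm}_{\rho^A\otimes\rho^B}\le(\Phi^-_{\rho^A}+\Phi^+_{\rho^A})\otimes(\Phi^-_{\rho^B}+\Phi^+_{\rho^B})$, proved via a log-convexity computation. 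This is the step that genuinely requires $\alpha>1$; for $\alpha<1$ the relevant function is concave and the inequality fails. After this, one still needs a diagonalization argument (forming Hermitian matrices $A_{i'i},B_{i'i}$ from the Schmidt vectors and bounding $\sum r_ir_{i'}A_{i'i}B_{i'i}$) to pass from the factorized bound back to the joint one; the cross terms among Schmidt components do \emph{not} cancel here, unlike the $\alpha=1$ case.
\end{itemize}

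For part (b), your argument is incomplete in the same way the second-order test always is at a degenerate critical point: knowing $D^2_Y\le 0$ with equality only along $x^A\otimes y^B$, together with the one-dimensional fact that $t\mapsto Q_\alpha(x^A\otimes\gamma^B(t))$ has a local max at $0$, does not by itself control mixed directions $c_1\,x^A\otimes y^B + c_3 Z$ beyond second order. The paper does not address this either; it establishes $D^2_Y Q_\alpha\le 0$ for all $Y$ (strict except on $x^A\otimes(x^B)^\perp$) and declares this a local maximum, so your treatment is no weaker than the paper's on this point.
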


Implicit in the above theorem is the fact that if $x^A$ and $x^B$ are critical points of $E_\alpha$ in $\mathcal{K}^{A}$
and $\mathcal{K}^{B}$, respectively, then $x^A\otimes x^B$ is a critical point of $E_\alpha$
in $\mathcal{K}^{A}\otimes\mathcal{K}^{B}$. This fact was observed in~\cite{AIM08} (see also~\cite{Maxim}), 
and was later stated in~\cite{FGA}.
It follows from the linearity in $y$ of the condition given in Eq.~(\ref{critical}) (see the next section) for critical points. We will therefore focus in this section on the higher order directional derivatives of $E_\alpha$ (or equivalently of $Q_\alpha$).

For the proof of Theorem \ref{thm.main}, we can assume without loss of generality that $n_1=m_1$, $n_2=m_2$, by padding the matrices with extra rows/columns of $0$s.
From the singular valued decomposition (see the argument below Definition~\ref{maindef}) we can assume 
without loss of generality that $x^A=\diag\{\sqrt{p_1},\ldots,\sqrt{p_{n_1}}\} $ 
and $x^B=\diag\{\sqrt{q_1},\ldots,\sqrt{q_{n_2}}\}$, 
where $p_j$ and $q_k$ are non-negative and $\sum_{j=1}^{n_1}p_i=\sum_{k=1}^{n_2}q_{j}=1$.

We first assume that both $x^A$ and $x^B$ are non-degenerate local maxima of $Q_\alpha$.
We need to show that $D_{y}^{2}Q_\alpha(x)<0$ for all $y\in x^\perp$, where $x\equiv x^A\otimes x^B$.
The most general
$y\in \left(x^A\otimes x^B\right)^\perp$ can be written as
\begin{equation}\label{generalform}
y=c_1x^A\otimes y^{B}+c_2y^{A}\otimes x^B+c_3y'\;,
\end{equation}
where $y^{A}\in (x^A)^\perp$, $y^{B}\in (x^B)^\perp$, and $y'\in\left(x^A\right)^\perp\otimes \left(x^B\right)^\perp$
are all normalized. The numbers $c_j$ can be chosen to be real because we can absorb their phases into $y^{A}$, $y^{B}$, and $y'$. They also
satisfy $c_{1}^{2}+c_{2}^{2}+c_{3}^{2}=1$, so that $y$ is normalized.

\begin{lemma}\label{directions}  If $x^A$ and $x^B$ are critical points, then
\be
D_{y}^{2}Q_\alpha(x)=c_{1}^{2}D_{x^A\otimes y^{B}}^{2}Q_\alpha(x)+c_{2}^{2}D_{y^{A}\otimes x^B}^{2}Q_\alpha(x)+c_{3}^{2}D_{y'}^{2}Q_\alpha(x).
\ee
\end{lemma}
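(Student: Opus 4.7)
My plan is to compute the second directional derivative $D_{y}^{2}Q_{\alpha}(x)$ explicitly, view it as a real quadratic form in $y$, and show that the cross terms between the three components $y_{1}$, $y_{2}$, $y_{3}$ vanish (which by polarization yields the stated identity). Setting $z(t)=(x+ty)/\sqrt{1+t^{2}}=x+ty-\tfrac{t^{2}}{2}x+O(t^{3})$, the matrix $M(t):=z(t)z(t)^{*}=M_{0}+tM_{1}+t^{2}M_{2}+O(t^{3})$ with $M_{0}=xx^{*}$, $M_{1}=yx^{*}+xy^{*}$, and $M_{2}=yy^{*}-xx^{*}$. Second-order spectral perturbation (or the Daletskii--Krein formula) then gives
\begin{equation*}
F''(0)=2\alpha\,\tr(M_{0}^{\alpha-1}M_{2})+\alpha\sum_{I,J}g(\mu_{I},\mu_{J})\,|(M_{1})_{IJ}|^{2},
\end{equation*}
where $\{|I\rangle\}$ diagonalizes $M_{0}$ with eigenvalues $\mu_{I}$ and $g(u,v):=(u^{\alpha-1}-v^{\alpha-1})/(u-v)$ (with $g(u,u)=(\alpha-1)u^{\alpha-2}$). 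With $x^{A}=\diag(\sqrt{p_{i}})$ and $x^{B}=\diag(\sqrt{q_{k}})$, $M_{0}=P\otimes Q$ and the product basis $|i\rangle|k\rangle$ diagonalizes $M_{0}$ with eigenvalues $p_{i}q_{k}$.

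Each cross term $B(y_{i},y_{j})$ has two pieces: an $M_{2}$-piece $2\alpha\,\mathrm{Re}\,\tr(M_{0}^{\alpha-1}y_{i}y_{j}^{*})$ from the polarization of $yy^{*}$, and an $M_{1}$-piece $\alpha\sum_{I,J}g(\mu_{I},\mu_{J})\,\mathrm{Re}\bigl((M_{1}^{(i)})_{IJ}\,\overline{(M_{1}^{(j)})_{IJ}}\bigr)$, both of which must vanish. The crucial input is the criticality of $x^{A}$, which in the Schmidt basis reads
\begin{equation*}
\sum_{i}p_{i}^{\alpha-1/2}(y^{A})_{ii}=0 \qquad\text{for every }y^{A}\in(x^{A})^{\perp}\subset\mathcal{K}^{A},
\end{equation*}
and analogously on the $B$-side. (This follows from $D_{y^{A}}Q_{\alpha}(x^{A})=2\alpha\,\mathrm{Re}\sum_{i}p_{i}^{\alpha-1/2}(y^{A})_{ii}$ together with the complex-linearity of $\mathcal{K}^{A}$, which lets one replace $y^{A}$ by $iy^{A}$.)

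For $B(y_{1},y_{2})$, using $M_{1}^{(1)}=P\otimes\widetilde{M}_{1}^{B}$ and $M_{1}^{(2)}=\widetilde{M}_{1}^{A}\otimes Q$ with $\widetilde{M}_{1}^{A}=y^{A}(x^{A})^{*}+x^{A}(y^{A})^{*}$, both pieces factor as an $A$-trace times a $B$-trace, and each factor reduces (via $(P^{\alpha-1}x^{A})_{ii}=p_{i}^{\alpha-1/2}$) to the critical sum above, hence is zero. For $B(y_{1},y_{3})$, reducing by linearity to $y_{3}=a\otimes b$ with $a\in(x^{A})^{\perp}$, $b\in(x^{B})^{\perp}$, the $M_{2}$-piece becomes $\bigl(\sum_{i}p_{i}^{\alpha-1/2}\overline{a_{ii}}\bigr)\tr(Q^{\alpha-1}y^{B}b^{*})$ plus its conjugate, killed by the first factor; for the $M_{1}$-piece, the $\delta_{ij}$ enforced by $M_{1}^{(1)}$ combined with the factorization $g(p_{i}q_{k},p_{i}q_{l})=p_{i}^{\alpha-2}g(q_{k},q_{l})$ pulls all $i$-dependence out into $\sum_{i}p_{i}^{\alpha-1/2}\overline{a_{ii}}$ (and its conjugate), again vanishing by $x^{A}$-criticality. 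The case $B(y_{2},y_{3})$ follows by the symmetric argument using $x^{B}$-criticality.

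The main obstacle is the bookkeeping for the $M_{1}$-piece of $B(y_{1},y_{3})$: the eigenvalues $p_{i}q_{k}$ of $M_{0}$ may be degenerate, so first-order Rayleigh--Schr\"odinger perturbation does not apply directly. Using the divided-difference form $g(u,v)$ handles degeneracies uniformly, and the clean factorization $g(p_{i}q_{k},p_{i}q_{l})=p_{i}^{\alpha-2}g(q_{k},q_{l})$ is precisely what allows the $A$-side and $B$-side of the double sum to separate, letting the criticality of $x^{A}$ finish the argument.
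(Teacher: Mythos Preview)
Your proof is correct and takes a genuinely different route from the paper. The paper first derives Lemma~\ref{derivatives} via the Hermitian linearization $X=\left(\begin{smallmatrix}0&x\\x^*&0\end{smallmatrix}\right)$, expressing $D_y^2Q_\alpha$ through the decomposition $y=w+iz$ and the Hadamard operators $\Phi_\rho^\pm$; it then expands $w$ and $z$ according to~\eqref{generalform} and kills each cross term using tensor identities such as $\Phi_{\rho^A\otimes\rho^B}^\pm(x^A\otimes w^B)=(x^A)^{2\alpha-1}\otimes\Phi_{\rho^B}^\pm(w^B)$ together with the criticality condition~\eqref{critical}. You instead apply Daletskii--Krein perturbation directly to $\tr[M(t)^\alpha]$ with $M(t)=z(t)z(t)^*$, and your divided-difference factorization $g(p_iq_k,p_iq_l)=p_i^{\alpha-2}g(q_k,q_l)$ plays exactly the role the paper's tensor identity plays. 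Both routes rest on the same mechanism---every cross term eventually collapses to a factor $\sum_i p_i^{\alpha-1/2}(y^A)_{ii}$ or its conjugate, killed by criticality---but your packaging avoids the linearization detour entirely. The price is a small regularity issue: $u\mapsto u^\alpha$ on $[0,\infty)$ is not $C^2$ at $0$ for $1<\alpha<2$, so your black-box appeal to Daletskii--Krein needs a remark (the limits exist because $(M_1)_{IJ}=0$ whenever $\mu_I=\mu_J=0$), whereas the paper's linearization yields $|x|^{2\alpha}\in C^{2+\epsilon}(\mathbb{R})$ outright. One further omission worth fixing: $D_y^2Q_\alpha(x)$ is not a pure quadratic form in $y$---the $-xx^*$ piece of $M_2$ contributes the constant $-2\alpha\tr[\rho^\alpha]$---so after the cross terms vanish you still need $c_1^2+c_2^2+c_3^2=1$ to distribute this constant across the three summands and obtain the stated identity; the paper does this explicitly in~\eqref{crossterms}.
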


It is therefore enough to consider the three directions $x^A\otimes y^{B}$, $y^{A}\otimes x^B$ and $y'$ separately. We will show that in each of this directions, the second order derivatives $D_{x^A\otimes y^{B}}^{2}Q_\alpha(x)$, $D_{y^{A}\otimes x^B}^{2}Q_\alpha(x)$, and $D_{y'}^{2}Q_\alpha(x)$ are all negative, so that $D_{y}^{2}Q_\alpha(x)<0$.

Consider first the simple case where $y=x^A\otimes y^{B}$. Here we have
\begin{align}\label{simple}
Q_\alpha\left(\frac{x+ty}{\sqrt{1+t^2}}\right)& =Q_\alpha\left(x^A\otimes \frac{x^B+ty^{B}}{\sqrt{1+t^2}}\right)
=Q_\alpha\left(x^A\right)Q_\alpha\left(\frac{x^B+ty^{B}}{\sqrt{1+t^2}}\right)\;.
\end{align}
Since $x^B$ is a non-degenerate local maximum, and $Q_\alpha(x^A)>0$, we must have $D_{y}^{2}Q_\alpha(x)<0$ for $y=x^A\otimes y^{B}$. The case $y=y^{A}\otimes x^B$ is similar.
 
Consider now the case in which $y\in \left(x^A\right)^\perp\otimes \left(x^B\right)^\perp$. 
To prove that $D_{y}^{2}Q_\alpha(x)<0$ in this case, we will use the following explicit computation.
 
\begin{lemma}\label{derivatives}
Let $x\in\mathcal{K}\subset\mathbb{C}^{n\times n}$ and $y\in x^{\perp}$ both normalized.
Denote the eigenvalues of $\rho\equiv xx^*$ by $\{p_j\}_{j=1}^{n}$, and decompose the complex matrix
$y=w+iz$ such that the $n\times n$ matrices $w$ and $z$ are both Hermitian. Then
\be\label{deriv}
\frac{1}{2\alpha}D_{y}^{2}Q_\alpha(x)=-\tr\left[\rho^\alpha\right]+\tr\left[w\Phi_{\rho}^{-}(w)+z\Phi_{\rho}^{+}(z)\right]\;,
\ee 
where $\Phi^{\pm}_{\rho}$ are self-adjoint linear maps acting on the Hilbert space of $n\times n$ complex matrices (equipped with the Hilbert-Schmidt inner product), defined by the following Hadamard product:
\be \label{e.phi.simple} 
\left[\Phi^{\pm}_{\rho}(y)\right]_{jk} \equiv \phi_{jk}^{\pm}\,y_{jk}\;\;\;\;\text{where}\;\;\;\;\phi_{jk}^\pm \equiv \frac{p_j^{\alpha-1/2}\pm p_k^{\alpha-1/2}}{p_j^{1/2}\pm p_k^{1/2}}. 
\ee
Furthermore, if $p_j=p_k$ then $\phi_{jk}^{+}= p_{j}^{\alpha-1}$ and $\phi_{jk}^{-}=(2\alpha-1)p_{j}^{\alpha-1}$.
\end{lemma}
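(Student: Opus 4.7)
The proof plan is to compute the second $t$-derivative of $Q_\alpha((x+ty)/\sqrt{1+t^2})$ directly in the Schmidt basis of $x$ using the Daleckii--Krein perturbation formula for matrix functions, and then identify the resulting quadratic form in $y$ with the Hadamard expressions $\Phi_\rho^\pm$.

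Write $\rho(t) \equiv (x+ty)(x+ty)^* = \rho + tA + t^2 B$ with $\rho = xx^*$, $A = xy^* + yx^*$, $B = yy^*$, and set $g(t) \equiv \tr[\rho(t)^\alpha]$. Since $Q_\alpha((x+ty)/\sqrt{1+t^2}) = (1+t^2)^{-\alpha} g(t)$, the Leibniz rule gives
$$D_y^2 Q_\alpha(x) = g''(0) - 2\alpha\,\tr[\rho^\alpha],$$
so the task reduces to computing $g''(0)$. For a smooth $f$ and Hermitian $M$ with spectral resolution $M = \sum_j \mu_j |j\rangle\langle j|$, the Daleckii--Krein formula states that $Df(M)[H]$ has eigenbasis matrix elements $f[\mu_j,\mu_k]\,H_{jk}$, where $f[\mu_j,\mu_k]$ is the divided difference (read as $f'(\mu_j)$ when $\mu_j=\mu_k$). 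Applying this with $f(u) = u^\alpha$ and differentiating $\tr f(\rho(t))$ twice yields
$$g''(0) = \alpha\sum_{j,k} \frac{p_j^{\alpha-1}-p_k^{\alpha-1}}{p_j-p_k}\,|A_{jk}|^2 + 2\alpha\,\tr[\rho^{\alpha-1}yy^*].$$

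Using $x = \diag(\sqrt{p_j})$, one finds $A_{jk} = \sqrt{p_j}\,\bar y_{kj} + \sqrt{p_k}\, y_{jk}$. Decomposing $y = w + iz$ with $w, z$ Hermitian, a direct expansion yields
$$|A_{jk}|^2 = (\sqrt{p_j}+\sqrt{p_k})^2 |w_{jk}|^2 + (\sqrt{p_j}-\sqrt{p_k})^2 |z_{jk}|^2 + 2(p_j-p_k)\,\mathrm{Im}(z_{jk}\bar w_{jk}),$$
while $|y_{jk}|^2 = |w_{jk}|^2 + |z_{jk}|^2 - 2\,\mathrm{Im}(z_{jk}\bar w_{jk})$, so $\tr[\rho^{\alpha-1} y y^*] = \sum_{j,k} p_j^{\alpha-1}|y_{jk}|^2$. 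Splitting each prefactor into its symmetric and antisymmetric parts under $j \leftrightarrow k$---and using that the sum of (symmetric)$\,\times\,$(antisymmetric) vanishes---one sees that the cross-term $\mathrm{Im}(z_{jk}\bar w_{jk})$ coming from $|A_{jk}|^2$ contributes $\sum_{j,k}(p_j^{\alpha-1}-p_k^{\alpha-1})\,\mathrm{Im}(z_{jk}\bar w_{jk})$, while the cross-term from $p_j^{\alpha-1}|y_{jk}|^2$ contributes exactly its negative; these cancel identically.

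What remains is a sum of pure $|w_{jk}|^2$ and $|z_{jk}|^2$ terms. The coefficient of $|w_{jk}|^2$ is
$$\frac{1}{2}\cdot\frac{(p_j^{\alpha-1}-p_k^{\alpha-1})(\sqrt{p_j}+\sqrt{p_k})^2}{p_j-p_k} + \frac{1}{2}(p_j^{\alpha-1}+p_k^{\alpha-1}),$$
and a short algebraic step using $p_j - p_k = (\sqrt{p_j}-\sqrt{p_k})(\sqrt{p_j}+\sqrt{p_k})$ collapses this to $(p_j^{\alpha-1/2}-p_k^{\alpha-1/2})/(\sqrt{p_j}-\sqrt{p_k}) = \phi_{jk}^-$; analogously, the coefficient of $|z_{jk}|^2$ collapses to $\phi_{jk}^+$. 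The degenerate values $p_j=p_k$ are recovered by taking limits of the divided differences, giving $(2\alpha-1)p_j^{\alpha-1}$ and $p_j^{\alpha-1}$ respectively. The main obstacle is the precise cancellation of the $\mathrm{Im}(z_{jk}\bar w_{jk})$ cross-terms between the two contributions to $g''(0)$; this is exactly what makes the Hermitian decomposition $y = w + iz$ the \emph{correct} splitting, producing a quadratic form that is diagonal in the $(w,z)$ sectors and exhibits the clean Hadamard structure of $\Phi_\rho^\pm$.
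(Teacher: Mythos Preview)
Your argument is correct and takes a genuinely different route from the paper. The paper's proof proceeds by a \emph{linearization trick}: it doubles the dimension, introducing the $2n\times 2n$ Hermitian matrix $X=\begin{pmatrix}0&x\\x^*&0\end{pmatrix}$ (and similarly $Y$), and uses the identity $\Tr[X^{2\alpha}]=2Q_\alpha(x)$. One then Taylor-expands $\Tr[g_\alpha(X+tY)]$ with $g_\alpha(\lambda)=\lambda^{2\alpha}$ using Kato--Rellich; after a Hadamard-type change of basis, $Y$ acquires the block form $\begin{pmatrix}w&-iz\\ iz&-w\end{pmatrix}$, and the eigenvalues of $X$ come in pairs $\pm\sqrt{p_j}$. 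The difference quotients $\frac{g_\alpha'(\lambda_j)-g_\alpha'(\lambda_k)}{\lambda_j-\lambda_k}$ then split automatically according to the block structure, directly producing the $\phi^{\pm}$ coefficients with no cross-term bookkeeping.

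Your approach stays in $n\times n$ and applies Daleckii--Krein to $u\mapsto u^\alpha$ along the quadratic path $\rho(t)=(x+ty)(x+ty)^*$. This is more direct---no dimension doubling---and your computation that the $\mathrm{Im}(z_{jk}\bar w_{jk})$ cross-terms cancel (via the antisymmetry of $\mathrm{Im}(z_{jk}\bar w_{jk})$ under $j\leftrightarrow k$) makes explicit \emph{why} the $(w,z)$ splitting diagonalizes the Hessian, something the paper's block structure encodes implicitly. The subsequent algebraic collapse to $\phi^\pm$ is correct.

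One technical caveat: the Daleckii--Krein second-derivative formula requires $f\in C^2$ on the spectrum, and $u\mapsto u^\alpha$ fails this at $u=0$ when $1<\alpha<2$. The paper's linearization sidesteps this entirely, since $\lambda\mapsto\lambda^{2\alpha}$ is $C^{2+\epsilon}$ on all of $\R$. In your setup, when some $p_j=0$ you would need a short additional argument (e.g.\ observing that the corresponding analytic eigenvalue branch $\mu_j(t)$ of $\rho(t)\ge0$ satisfies $\mu_j(t)=O(t^2)$, so $\mu_j(t)^\alpha=O(t^{2\alpha})=o(t^2)$ contributes nothing to $g''(0)$, consistent with your formula).
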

Note that the second order derivative is well behaved even if some $p_j$ are zero.
Now, denote by $\rho^{A}\equiv x^Ax^{A*}$ and $\rho^{B}\equiv x^Bx^{B*}$ the two density matrices associated with the two local maxima. From the lemma above it follows that 
$D_{y}^{2}Q_\alpha(x^A\otimes x^B)<0$ for all $y\in \left(x^A\right)^\perp\otimes \left(x^B\right)^\perp$ if and only if
\be\label{main}
\tr\left[w\Phi_{\rho^A\otimes\rho^B}^{-}(w)+z\Phi_{\rho^A\otimes\rho^B}^{+}(z)\right]< \Tr[(\rho^A)^\alpha]\Tr[(\rho^B)^\alpha] \;\;\;\;,\;\;\;\forall\;y=w+iz\in\left(x^A\right)^\perp\otimes \left(x^B\right)^\perp\;.
\ee
Since $x^{A}$ and $x^B$ are local maxima of $Q_\alpha$ in their respective subspaces, it follows that  both $D_{y^{A}}^{2}Q_\alpha(x^{A})<0$ and $D_{iy^{A}}^{2}Q_\alpha(x^{A})<0$, since both $y^{A}$ and $iy^{A}$ belong to $\left(x^{A}\right)^\perp$. In particular,
\begin{align}\label{complex}
0 &>\frac{1}{4\alpha}\left(D_{y^{A}}^{2}Q_\alpha(x^A)+D_{iy^{A}}^{2}Q_\alpha(x^A)\right)\nonumber\\
 &=-\Tr[(\rho^A)^\alpha]+\frac{1}{2}\tr\left[w^{A}\left(\Phi_{\rho^A}^{-}+\Phi_{\rho^A}^{+}\right)(w^{A})+z^{A}\left(\Phi_{\rho^A}^{-}+\Phi_{\rho^A}^{+}\right)(z^{A})\right]\nonumber\\
 &=-\Tr[(\rho^A)^\alpha]+\tr\left[y^{A*}\left(\Phi_{\rho^A}^{-}+\Phi_{\rho^A}^{+}\right)(y^{A})\right]
\end{align}
where the last equality follows from the self-adjointness of $\Phi_{\rho^A}^{-}+\Phi_{\rho^A}^{+}$ with respect to the Hilbert-Schmidt inner product, and the decomposition $y^{A}=w^{A}+iz^{A}$ where $w^A$ and $z^A$ are Hermitian. We therefore arrive at the following inequalities:
\begin{equation} \label{assumption}
\begin{aligned}
\tr\left[y^{A*}\left(\Phi_{\rho^A}^{-}+\Phi_{\rho^A}^{+}\right)(y^{A})\right]&<\Tr[(\rho^A)^\alpha]\;\;\;\;\;\forall\;y^A\in\left(x^{A}\right)^\perp\\
\tr\left[y^{B*}\left(\Phi_{\rho^B}^{-}+\Phi_{\rho^B}^{+}\right)(y^{B})\right]&<\Tr[(\rho^B)^\alpha]\;\;\;\;\;\forall\;y^B\in\left(x^{B}\right)^\perp.
\end{aligned}
\end{equation}

In the final step towards the proof of the theorem, we will be using the following key operator estimate.

\begin{lemma} \label{additivity}
Let $\Psi$ be the self-adjoint linear operator acting on $\cK^A\otimes\cK^B$, given by the following Hadamard product:
\be
[\Psi(y)]_{jk,\ell m} = \psi_{jk,\ell m}\,y_{jk,\ell m}
\ee
where
\be
\psi_{jk,\ell m}=(\phi_{j\ell}^{+A}+\phi_{j\ell}^{-A})(\phi_{km}^{+B}+\phi_{km}^{-B})=\frac{p_j^\alpha-p_\ell^\alpha}{p_j-p_\ell}\frac{q_k^\alpha-q_m^\alpha}{q_k-q_m}
\ee
and $\{p_j\}$ and $\{q_k\}$ are the eigenvalues of $\rho^A$ and $\rho^B$.
Then
\be\label{thm}
\Phi_{\rho^{A}\otimes\rho^{B}}^{\pm}\leq \Psi\;.
\ee
\end{lemma}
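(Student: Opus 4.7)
My plan is to establish the operator inequality pointwise. Both $\Phi^\pm_{\rho^A\otimes\rho^B}$ and $\Psi$ are Schur (Hadamard) multipliers on the Hilbert--Schmidt matrix space, diagonal in the elementary-matrix basis $|jk\rangle\langle\ell m|$, with quadratic form $\sum\phi_{jk,\ell m}|y_{jk,\ell m}|^2$. Hence $\Phi^\pm\leq\Psi$ is equivalent to the entrywise scalar inequalities $\phi^\pm_{jk,\ell m}\leq\psi_{jk,\ell m}$. Setting $u=(p_jq_k)^{1/2}$, $v=(p_\ell q_m)^{1/2}$, a direct computation gives
\[
\phi^-_{jk,\ell m}-\phi^+_{jk,\ell m}=\frac{2uv(u^{2\alpha-2}-v^{2\alpha-2})}{u^2-v^2}\geq 0
\]
for $\alpha\geq 1$, so it suffices to treat $\phi^-$.

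Write $a=p_j^{1/2}$, $b=p_\ell^{1/2}$, $c=q_k^{1/2}$, $d=q_m^{1/2}$, and abbreviate $\chi^A\equiv\phi^{+A}_{j\ell}+\phi^{-A}_{j\ell}$, $\chi^B\equiv\phi^{+B}_{km}+\phi^{-B}_{km}$, so $\psi_{jk,\ell m}=\chi^A\chi^B$. The key identity is the double telescoping
\[
(ac)^{2\alpha-1}-(bd)^{2\alpha-1}=c^{2\alpha-1}(a^{2\alpha-1}-b^{2\alpha-1})+b^{2\alpha-1}(c^{2\alpha-1}-d^{2\alpha-1}),
\]
together with $ac-bd=c(a-b)+b(c-d)$, which yields
\[
\phi^-_{jk,\ell m}=\frac{c(a-b)\cdot c^{2\alpha-2}\phi^{-A}_{j\ell}+b(c-d)\cdot b^{2\alpha-2}\phi^{-B}_{km}}{c(a-b)+b(c-d)}.
\]
When $a\geq b$ and $c\geq d$ this is a convex combination; each summand is bounded by $\psi$ using (i) $\phi^{-A}\leq\chi^A$ (from $\phi^{+A}\geq 0$), and (ii) the elementary estimate $\max(s,t)^{2\alpha-2}\leq\chi(s,t):=2(s^{2\alpha}-t^{2\alpha})/(s^2-t^2)$ for $s,t\geq 0$ and $\alpha\geq 1$, which (for $s\geq t$) reduces to $s^{2\alpha-2}(s^2+t^2)\geq 2t^{2\alpha}$, a direct consequence of $s^{2\alpha-2}\geq t^{2\alpha-2}$ and $s^2+t^2\geq 2t^2$. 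Hence $c^{2\alpha-2}\leq\chi^B$ and $b^{2\alpha-2}\leq\chi^A$ (the latter a fortiori since $b\leq a$), making each summand at most $\chi^A\chi^B=\psi$.

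The remaining sign configurations are reduced to this case via the monotonicity claim that, at fixed product $uv$, $\phi^-(u,v)$ is nondecreasing in $u+v$. Writing $u=e^{x+y}$, $v=e^{x-y}$ for $y\geq 0$ gives $\phi^-(u,v)=e^{(2\alpha-2)x}\sinh((2\alpha-1)y)/\sinh(y)$; with $uv=e^{2x}$ fixed, $u+v=2e^x\cosh(y)$ is monotone in $y$, and $\sinh(ry)/\sinh(y)$ is nondecreasing on $y\geq 0$ for $r\geq 1$ (the numerator of its derivative, $r\cosh(ry)\sinh(y)-\sinh(ry)\cosh(y)$, vanishes at $y=0$ and has derivative $(r^2-1)\sinh(ry)\sinh(y)\geq 0$). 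Since $(ad)(bc)=(ac)(bd)$ and $(ad)+(bc)-(ac)-(bd)=(a-b)(d-c)$, whenever $(a-b)$ and $(c-d)$ have opposite signs one obtains $\phi^-(ac,bd)\leq\phi^-(ad,bc)$; the latter lies in the main case after relabeling $c\leftrightarrow d$, under which $\psi$ is invariant by the symmetry of $\chi^B$. The case with both $(a-b)\leq 0$ and $(c-d)\leq 0$ is reduced to the main case by the simultaneous swap $(a,c)\leftrightarrow(b,d)$.

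The hardest step is the monotonicity of $\phi^-$ at fixed product: without it the telescoping identity produces only mixed-sign linear combinations in the unaligned sign sectors, and the simple convex-combination bound fails. The other ingredients--the pointwise reduction, the reduction $\phi^+\leq\phi^-$, and the elementary estimates $\phi^{-A}\leq\chi^A$ and $\max^{2\alpha-2}\leq\chi$--are all direct manipulations of divided differences of the power function.
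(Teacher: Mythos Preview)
Your proof is correct and takes a genuinely different route from the paper. Both begin identically—reduce the Hadamard-operator inequality to the entrywise bound and then to $\phi^-\leq\psi$ via $\phi^+\leq\phi^-$—but then diverge. The paper normalizes by $(p_jq_k)^{\alpha-1}$ to obtain a single two-variable inequality $f(st)\leq g(s)g(t)$ with $g(t)=(1-t^\alpha)/(1-t)$, which it proves in two steps: first $f(r^2)\leq g(r)^2$ by direct algebra, then $g(\sqrt{st})^2\leq g(s)g(t)$ by showing that $\xi\mapsto\ln g(e^\xi)$ is convex through a second-derivative computation on $\upsilon(\alpha,\xi)=\alpha^2e^{\alpha\xi}/(e^{\alpha\xi}-1)^2$. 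This analytic route handles all sign configurations at once and in fact yields the sharper bound $\phi^-\leq\tfrac14\chi^A\chi^B$ (the two displayed expressions for $\psi$ in the lemma statement actually differ by a factor of $4$; only the weaker form, which you establish, is used downstream). Your telescoping identity instead expresses $\phi^-$ as a weighted mean of $c^{2\alpha-2}\phi^{-A}$ and $b^{2\alpha-2}\phi^{-B}$, each bounded by $\chi^A\chi^B$ via the elementary estimates $\phi^-\leq\chi$ and $\max^{2\alpha-2}\leq\chi$; the price is that the weights are nonnegative only when $(a-b)$ and $(c-d)$ share a sign, so you need the separate monotonicity-at-fixed-product argument (the $\sinh((2\alpha-1)y)/\sinh y$ step) to reduce the mixed-sign sectors. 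In short, you trade the paper's log-convexity calculus for a more constructive telescoping bound plus a case split; both approaches break down for $\alpha<1$ at the analogous point.
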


With this lemma at hand, we are ready to prove inequality~\eqref{main}, and thus Theorem \ref{thm.main}(a).
First, observe that
\be\label{beg}
\tr\left[w\Phi_{\rho^A\otimes\rho^B}^{-}(w)+z\Phi_{\rho^A\otimes\rho^B}^{+}(z)\right]
\leq\tr\left[w\Psi(w)+z\Psi(z)\right]=\tr\left[y^*\Psi(y)\right].
\ee
Now, note that any
$y\in \left(x^A\right)^\perp\otimes \left(x^B\right)^\perp$ has a Schmidt decomposition given by
 \begin{equation}\label{yprime}
 y=\sum_{i}r_i y^{A}_{i}\otimes y^{B}_{i}\;,
 \end{equation}
 where 
 \begin{equation}\label{ort}
 \tr[y^{A}_{i}y^{A*}_{i'}] = \tr[y^{B}_{i}y^{B*}_{i'}]=\delta_{ii'}\;,
 \end{equation}
 and $r_i$ are real non-negative numbers such that $\sum_{i}r_{i}^{2}=1$.
Therefore, substituting~\eqref{yprime} into the right-hand-side of~\eqref{beg} gives
\begin{align*}
\tr\left[y^*\Psi(y)\right] & =\sum_{i,i'}r_{i}r_{i'}\tr\left[y^{A*}_{i'}\otimes y^{B*}_{i'}\Psi(y^{A}_{i}\otimes y^{B}_{i})\right]\\
& =\sum_{i,i'}r_ir_{i'}\left[\sum_{j,\ell}\left(y^{A*}_{i'}\right)_{j\ell}\left(y^{A}_{i}\right)_{\ell j}\left(\phi_{j\ell}^{+A}+\phi_{j\ell}^{-A}\right)\sum_{k,m}\left(y^{B*}_{i'}\right)_{km}\left(y^{B}_{i}\right)_{m k}\left(\phi_{km}^{+B}+\phi_{km}^{-B}\right)\right]\\
&=\sum_{i,i'}r_ir_{i'}\tr\left[y_{i'}^{A*}\left(\Phi_{\rho^A}^{-}+\Phi_{\rho^A}^{+}\right)(y_{i}^{A})\right]
\tr\left[y_{i'}^{B*}\left(\Phi_{\rho^B}^{-}+\Phi_{\rho^B}^{+}\right)(y_{i}^{B})\right].
\end{align*}
Now, denote
\begin{align}
& A_{i'i}\equiv\tr\left[y_{i'}^{A*}\left(\Phi_{\rho^A}^{-}+\Phi_{\rho^A}^{+}\right)(y_{i}^{A})\right],\nonumber\\
& B_{i'i}\equiv\tr\left[y_{i'}^{B*}\left(\Phi_{\rho^B}^{-}+\Phi_{\rho^B}^{+}\right)(y_{i}^{B})\right].
\end{align}
Since the operators $\Phi_{\rho^A}^{-}+\Phi_{\rho^A}^{+}$ and $\Phi_{\rho^B}^{-}+\Phi_{\rho^B}^{+}$ are self-adjoint, the matrices $A$ and $B$ are Hermitian.  Inequality \eqref{assumption} then gives the following upper bounds.
\begin{lemma} \label{lem.A<Tr} The Hermitian matrices $A,B$ above satisfy
\[ 0\leq A < \Tr[(\rho^A)^\alpha] \qquad \text{and} \qquad 0\leq B < \Tr[(\rho^B)^\alpha]. \]
\end{lemma}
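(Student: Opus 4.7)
\textbf{Proof plan for Lemma \ref{lem.A<Tr}.} The guiding observation is that $A$ is precisely the matrix representing the self-adjoint operator $\Phi_{\rho^A}^{-}+\Phi_{\rho^A}^{+}$ in the Hilbert--Schmidt orthonormal family $\{y_i^A\}\subset (x^A)^\perp$. Concretely, for any coordinate vector $v$, set $Y\equiv \sum_i v_i\, y_i^A$; then by~\eqref{ort},
\be
v^{*} A v=\tr\!\left[Y^{*}\!\left(\Phi_{\rho^A}^{-}+\Phi_{\rho^A}^{+}\right)(Y)\right],\qquad \|Y\|^{2}=\|v\|^{2}.
\ee
Both halves of the lemma will therefore follow from operator bounds on $\Phi_{\rho^A}^{-}+\Phi_{\rho^A}^{+}$ applied to this particular $Y$.

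For the lower bound $A\geq 0$, I would use that $\Phi_{\rho^A}^{\pm}$ are Hadamard multipliers, so $\tr[Y^{*}\Phi_{\rho^A}^{\pm}(Y)] = \sum_{j,k} |Y_{jk}|^{2}\,\phi_{jk}^{\pm}$. For $\alpha>1$ the function $t\mapsto t^{\alpha-1/2}$ is nonnegative and strictly increasing on $[0,\infty)$; hence $\phi_{jk}^{+}$ (a sum of nonnegative quantities divided by a positive one whenever $p_j+p_k>0$) and $\phi_{jk}^{-}$ (the difference quotient of an increasing function at $\sqrt{p_j}$ and $\sqrt{p_k}$) are both nonnegative, and the coincidence formulas $\phi_{jj}^{+}=p_j^{\alpha-1}$, $\phi_{jj}^{-}=(2\alpha-1)p_j^{\alpha-1}$ are likewise nonnegative. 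Therefore $\Phi_{\rho^A}^{-}+\Phi_{\rho^A}^{+}\succeq 0$ as an operator on complex matrices, and $v^{*} A v\geq 0$ for every $v$.

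For the strict upper bound, the first inequality of~\eqref{assumption} is stated for normalized $y^A\in(x^A)^\perp$ and is invariant under rescaling; so it upgrades to
\be
\tr\!\left[y^{A*}\!\left(\Phi_{\rho^A}^{-}+\Phi_{\rho^A}^{+}\right)(y^A)\right]<\Tr\!\left[(\rho^A)^{\alpha}\right]\cdot\|y^A\|^{2}
\ee
for every \emph{nonzero} $y^A\in(x^A)^\perp$. Now apply this with $y^A=Y=\sum_i v_i\, y_i^A$ for an arbitrary nonzero $v$: each $y_i^A$ lies in $(x^A)^\perp$, so $Y$ does too, and the Hilbert--Schmidt orthonormality of the Schmidt factors $y_i^A$ gives $\|Y\|^{2}=\|v\|^{2}>0$. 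Substituting yields $v^{*}Av<\Tr[(\rho^A)^{\alpha}]\,\|v\|^{2}$ for all $v\neq 0$, which is exactly $A<\Tr[(\rho^A)^{\alpha}]\,I$ as Hermitian matrices. The argument for $B$ is identical, with $\rho^A,y_i^A,x^A$ replaced throughout by $\rho^B,y_i^B,x^B$ and the second inequality of~\eqref{assumption} used in place of the first. The only mildly delicate point — and the one I expect a referee to scrutinize — is the implication ``$v\neq 0\Rightarrow Y\neq 0$'' needed to preserve strict inequality, but it follows at once from the orthonormality of $\{y_i^A\}$ in the Hilbert--Schmidt inner product.
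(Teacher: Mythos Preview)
Your argument is correct and follows essentially the same route as the paper: you reinterpret $v^\ast A v$ as $\tr[Y^\ast(\Phi_{\rho^A}^{-}+\Phi_{\rho^A}^{+})(Y)]$ with $Y=\sum_i v_i y_i^A\in(x^A)^\perp$, then get the lower bound from nonnegativity of the Hadamard multipliers and the strict upper bound from \eqref{assumption}. The only cosmetic difference is that the paper takes $v$ to be a unit vector from the outset (so $Y$ is automatically a unit vector and \eqref{assumption} applies directly), and for $A\ge 0$ it observes the combined multiplier $\phi_{jk}^{-}+\phi_{jk}^{+}=2(p_j^\alpha-p_k^\alpha)/(p_j-p_k)\ge 0$ rather than bounding $\phi^{\pm}$ separately; your version is equally valid.
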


\begin{proof} Fix a unit vector $v = [v_i]$; the claim is that $0\leq \langle v,Av\rangle < \Tr[(\rho^A)^\alpha]$ (and similarly for $B$ and $\rho^B$).  Note that
\begin{align*} \langle v,Av\rangle = \sum_{i,i'} \overline{v_{i'}} v_i A_{i'i} &= \sum_{i,i'} \overline{v_{i'}} v_i \tr\left[y_{i'}^{A*}\left(\Phi_{\rho^A}^{-}+\Phi_{\rho^A}^{+}\right)(y_{i}^{A})\right] \\
&= \Tr\left[\sum_{i,i'} (v_{i'}y^A_{i'})^\ast \left(\Phi_{\rho^A}^{-}+\Phi_{\rho^A}^{+}\right)(v_iy_i^A)\right] = \Tr\left[y^{A\ast}\left(\Phi_{\rho^A}^{-}+\Phi_{\rho^A}^{+}\right)(y^A)\right],
\end{align*}
where $y^A=\sum_i v_iy_i$.  From \eqref{ort}, it follows that $y^A$ is also a unit vector; it is in the subspace $(x^A)^\perp$ (being a linear combination of $y_i$ that are in this subspace). Note that the RHS of the equation above is non-negative since $(p_{j}^{\alpha}-p_{k}^{\alpha})/(p_j-p_k)\geq 0$ for all $j,k$. The rest of the statement of the lemma now follows by \eqref{assumption} (the case for $B$ and $\rho^B$ is similar). \end{proof}

Now, diagonalize $A$ and $B$: $A=U^*D^AU$ and $B=V^*D^BV$ where $U$ and $V$ are unitary matrices and $D^A$ and $D^B$ are diagonal matrices.  Lemma \ref{lem.A<Tr} shows that the diagonal entries of $D^A$ are all less than $\Tr[(\rho^A)^\alpha]$, and the diagonal entries of $D^B$ are all less than $\Tr[(\rho^B)^\alpha]$. With this in mind, we get
\begin{align*}
\tr\left[y^*\Psi(y)\right] =\sum_{i,i'}r_{i}r_{i'}A_{i'i}B_{i'i}
&=\sum_{i,i'}r_{i}r_{i'}\sum_{k}U^{*}_{i'k}D^{A}_{kk}U_{ki}\sum_jV^{*}_{i'j}D^{B}_{jj}V_{ji} \\
&= \sum_{j,k} D^A_{kk}D^B_{jj} \sum_{i,i'} r_ir_{i'} U^\ast_{i'k}U_{ki}V^\ast_{i'j}V_{ji}.
 \end{align*}
Note that the internal sum can be written as
\[ \sum_{i,i'} r_ir_{i'} U^\ast_{i'k}U_{ki}V^\ast_{i'j}V_{ji} = \sum_i r_i U_{ki}V_{ji} \, \sum_{i'} r_{i'} U_{i'k}^\ast V_{i'j}^\ast = \left|\sum_{i}r_iU_{ki}V_{ji}\right|^2.  \]
In particular, the internal sum is positive.  Hence, using Lemma \ref{lem.A<Tr}, we have
\be \label{e.penult} \Tr[y^\ast\Psi(y)] =\sum_{k,j}D^{A}_{kk}D^{B}_{jj}\left|\sum_{i}r_iU_{ki}V_{ji}\right|^2 < \Tr[(\rho^A)^\alpha]\Tr[(\rho^B)^\alpha]\sum_{j,k}\left|\sum_{i}r_iU_{ki}V_{ji}\right|^2. \ee
Finally, we may again expand this sum.  Using the unitarity of $U$ and $V$, we find
\[ \sum_{j,k}\left|\sum_{i}r_iU_{ki}V_{ji}\right|^2 = \sum_{i,i',j,k}r_{i}r_{i'}U^{*}_{i'k}U_{ki}V^{*}_{i'j}V_{ji} = \sum_{i,i'}r_{i}r_{i'}\delta_{i'i}\delta_{i'i} = \sum_i r_i^2 = 1.  \]
Combining this with \eqref{e.penult} yields $\Tr[y^\ast\Psi(y)] < \Tr[(\rho^A)^\alpha]\Tr[(\rho^B)^\alpha]$.  This, combined with \eqref{beg}, proves \eqref{main}, concluding the proof of part~(a) of the theorem. 

To prove part (b), we assume w.l.o.g. that $x^A$ is a strict (i.e. non-degenerate) local maximum of $Q_\alpha$, and $x^B$ is a degenerate local maximum of $Q_\alpha$. We therefore have
\begin{equation} \begin{aligned}
& D_{y^{A}}^{2}Q_\alpha(x^A)<0\;\;\;\;\;\;\forall\;\;y^A\in(x^A)^\perp\\
& D_{y^{B}}^{2}Q_\alpha(x^B)\leq 0\;\;\;\;\;\;\forall\;\;y^B\in(x^B)^\perp.
\end{aligned}\end{equation}

First note that from~\eqref{simple} it follows that $x^A\otimes x^B$ is a \emph{degenerate} local maximum of $Q_\alpha$ (recall $\alpha>1$) in any direction of the form $y=x^A\otimes y^{B}$, with $y^B\in(x^B)^\perp$.
Similar arguments shows that $x^A\otimes x^B$ is a \emph{non-degenerate} local maximum of $Q_\alpha$ in any direction of the form $y=y^A\otimes x^B$ where $y^A\in(x^A)^\perp$. It is therefore left to consider directions $y\in \left(x^A\right)^{\perp}\otimes \left(x^B\right)^\perp$. In these directions $x^A\otimes x^B$ is a \emph{non-degenerate} local maximum of $Q_\alpha$ since $D_{y}^{2}Q_\alpha(x^A\otimes x^B)<0$ for $y\in \left(x^A\right)^{\perp}\otimes \left(x^B\right)^\perp$. To see this, note that~\eqref{e.penult} still holds with strict inequality since the first equation of~\eqref{assumption} still holds, while the second equation of~\eqref{assumption} holds with $\leq$ sign. This concludes the proof of part (b) of the theorem.

\section{Proof lemma~\ref{derivatives}}\label{sec:der}

\subsection{Linearization}

Let $x,y\in\mathcal{K}\subset\C^{n\times n}$. 
We define two Hermitian matrices $X$ and $Y$ in $\mathbb{C}^{2n\times 2n}$ 
corresponding to $x$ and $y$, respectively:
\begin{equation}\label{lin}
 X=\left[\begin{array}{cc} 0 & x\\ x^{*} &0 \end{array}\right]\;\text{and}\;
Y=\left[\begin{array}{cc} 0 & y \\y^{*}&0 \end{array}\right]\;, 
 \end{equation}
 where $0$ stands for the $n\times n$ zero matrix. Note that  $\text{Tr}(X^2)=\text{Tr}(Y^2)=2$ if $\text{Tr}(xx^*)=\text{Tr}(yy^*)=1$. 

 We will also denote by $\mathcal{L}\subset\mathbb{C}^{2n\times 2n}$ 
 the linearization space corresponding to $\mathcal{K}$; that is,
 $$
 \mathcal{L}:=\{Y\in\mathbb{C}^{2n\times 2n}\;|\;y\in\mathcal{K}\}\;,
 $$
 where $Y$ corresponds to $y$ as in Eq.~(\ref{lin}).
 Note that for $Y_1,Y_2\in\mathcal{L}$ and for $r_1,r_2\in\mathbb{R}$, the matrix $r_1Y_1+r_2Y_2$ is also in
 $\mathcal{L}$; that is, $\mathcal{L}$ is a vector space over the \emph{real} numbers.
 
 Define 
 \be
 G_\alpha(X)=\text{Tr}\left[X^{2\alpha}\right]\;. 
 \ee
 Then a key observation is that 
 \be 
 G_\alpha(X)=2Q_\alpha(x)\;.
 \ee 
 Thus, in order to calculate the directional derivatives of $Q_\alpha$ in $\mathcal{K}$ , we will first find the derivatives of $G_\alpha$ in $\cL$, and then translate these calculations back to the space $\mathcal{K}$.
 
 Consider the point $(x+ty)/\sqrt{1+t^2}$ (here $0<t\in\mathbb{R}$), in a neighbourhood of $x$.
 Recall from our discussion earlier that $\tr(xy^*)=0$ and assume the normalization $\text{Tr}(xx^*)=\text{Tr}(yy^*)=1$. This point is mapped to $(X+tY)/\sqrt{1+t^2}$ in the space $\mathcal{L}$.
 Moreover, the condition $\tr(xy^*)=0$ is equivalent to the conditions $\tr(XY)=\tr(X\tilde{Y})=0$, where
 \begin{equation}
\tilde{Y}:=\left[\begin{array}{cc} 0 & iy \\-iy^{*}& 0\end{array}\right]\;. 
 \end{equation}
Now, note that
 \begin{align}\label{approx}
 G_\alpha\left(\frac{X+tY}{\sqrt{1+t^2}}\right)=\frac{1}{(1+t^2)^\alpha}G_\alpha(X+tY)&=G_\alpha(X+tY)-\alpha t^2 G_\alpha(X+tY)+O(t^4).
\end{align}
Recall that $x=\text{diag}\{p_1,...,p_n\}$. 
Hence, the eigenvalues of the matrix $X$ are $\{\lambda_j\}_{j=1,...,2n}$,
where $\lambda_j=\sqrt{p_j}$ for $j=1,...,n$, $\lambda_j=-\sqrt{p_{j-n}}$ for $j=n+1,...,2n$.
We will be working with a basis in which $X=\text{diag}\{\lambda_1,...,\lambda_{2n}\}$. In this basis $Y$ does not have the form
given in Eq.(\ref{lin}). To understand the form of $Y$, we now discuss the diagonalization of $X$.

Recall that the matrix $x\in\mathbb{C}^{n\times n}$ is diagonal 
(the singular value decomposition theorem). 
In order to make $X$ diagonal we conjugate it with the following generalization of the Hadamard matrix:
$$
\text{diag}\{\lambda_1,...,\lambda_{2n}\}=UXU\;\text{and}\;U=\frac{1}{\sqrt{2}}\left[\begin{array}{cc} I & I\\ I &-I
\end{array}\right]
$$
where $I$ is the $n\times n$ identity matrix.
Under this change of basis, $Y$ takes the form
\begin{equation}
Y=\left[\begin{array}{cc} w & -iz\\ iz &-w\end{array}\right]\;,
\label{forms}
\end{equation}
where $w$ and $z$ are the $n\times n$  Hermitian matrices defined by $y=w+iz$, or equivalently:
$$
w\equiv \frac{1}{2}\left(y+y^{*}\right)\;\text{and}\;z=\frac{1}{2i}(y-y^*)\;.
$$

\subsection{Taylor Expansion}

In~\cite{GF} the following analytic matrix Taylor expansion was given.
\begin{theorem}[\cite{GF}, Theorem 2] \label{taylor} Let $A,B\in\mathbb{C}^{n\times n}$, and suppose $A=\mathrm{diag}(\lambda_1,\ldots,\lambda_n)$.  If $f\colon\R\to\C$ is real analytic on an open neighborhood of the eigenvalues $\{\lambda_1,\ldots,\lambda_n\}$ of $A$, then
 \begin{align}\label{polcase1}
 \tr [f(A+tB)]=\tr [f(A)]+
 t\sum_{j=1}^{n}f'(\lambda_j)[B]_{jj}+t^2\sum_{j,k=1}^n \frac{f'(\lambda_j)-f'(\lambda_k)}{2(\lambda_j-\lambda_k)}[B]_{jk}[B]_{kj}+O(t^3)\;.
 \end{align}
 \end{theorem}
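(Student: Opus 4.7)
The plan is to use the holomorphic functional calculus. Extending $f$ holomorphically to a complex neighborhood of the spectrum of $A$ and choosing a simple closed contour $\gamma$ inside this domain that encloses all the eigenvalues $\lambda_j$, for $|t|$ small enough so that $\gamma$ also encloses the spectrum of $A+tB$ one has
\[
f(A+tB)=\frac{1}{2\pi i}\oint_\gamma f(z)(zI-A-tB)^{-1}\,dz,
\]
and the resolvent admits the Neumann expansion $(zI-A-tB)^{-1}=\sum_{k\ge 0}t^k[(zI-A)^{-1}B]^k(zI-A)^{-1}$, uniformly convergent on $\gamma$ for $|t|$ smaller than the distance from the spectrum of $A$ to $\gamma$ divided by $\|B\|$. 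Integrating term by term and taking the trace then supplies the Taylor coefficients of $\tr[f(A+tB)]$ in $t$, with an explicit $O(t^3)$ remainder obtained from the tail of this series.

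The first two coefficients reduce, via cyclicity of the trace and the diagonal form of $A$, to scalar contour integrals. The linear term involves $\tr[(zI-A)^{-2}B]=\sum_j B_{jj}/(z-\lambda_j)^2$; Cauchy's integral formula for the first derivative then yields $\sum_j f'(\lambda_j)B_{jj}$. For the quadratic term, an analogous manipulation gives $\sum_{j,k}B_{jk}B_{kj}\cdot I_{jk}$ with $I_{jk}:=\frac{1}{2\pi i}\oint_\gamma f(z)/[(z-\lambda_j)^2(z-\lambda_k)]\,dz$. Residue calculus evaluates $I_{jk}$ for $\lambda_j\ne\lambda_k$ as $f'(\lambda_j)/(\lambda_j-\lambda_k)+[f(\lambda_k)-f(\lambda_j)]/(\lambda_j-\lambda_k)^2$, which is not symmetric in $(j,k)$. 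But because $B_{jk}B_{kj}$ \emph{is} symmetric under the swap $(j,k)\leftrightarrow(k,j)$, averaging $I_{jk}$ with $I_{kj}$ is legitimate; the asymmetric $f$-difference terms cancel, leaving the symmetric divided difference $(f'(\lambda_j)-f'(\lambda_k))/(2(\lambda_j-\lambda_k))$ demanded by the theorem.

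The point requiring care, which I expect to be the main obstacle, is the degenerate case $\lambda_j=\lambda_k$: the integrand then acquires a triple pole, and one must verify by direct residue computation (or equivalently, by a $\lambda_k\to\lambda_j$ continuity argument) that the formula extends to $f''(\lambda_j)/2$. A purely elementary alternative, sidestepping complex analysis, is to prove the monomial case $f(x)=x^m$ by the noncommutative binomial expansion of $(A+tB)^m$ together with cyclicity of the trace; the combinatorial heart is that $\sum_{p=0}^{m-2}(p+1)\lambda^p\mu^{m-2-p}$, once symmetrized in $\lambda,\mu$, equals $m(\lambda^{m-1}-\mu^{m-1})/[2(\lambda-\mu)]$, which one proves by pairing the sum with its reversal under $p\mapsto m-2-p$ and invoking the telescoping identity $\sum_p\lambda^p\mu^{m-2-p}=(\lambda^{m-1}-\mu^{m-1})/(\lambda-\mu)$. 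The general real-analytic $f$ then follows by inserting its power series into the monomial formula and exchanging sum with trace via uniform convergence on a small enough $t$-interval.
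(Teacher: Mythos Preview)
Your proposal is correct, and both routes you sketch (holomorphic functional calculus via the resolvent Neumann series, and the elementary monomial expansion followed by power-series summation) are valid and complete. The residue computations and the symmetrization argument are right, and your handling of the degenerate case $\lambda_j=\lambda_k$ is the standard one.

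However, the paper does not prove this theorem in the way you do. It simply quotes Theorem~\ref{taylor} from~\cite{GF} without proof, and then proves a \emph{stronger} statement (Theorem~\ref{t.Taylor.2}) where $f$ is only assumed $C^{2+\epsilon}$ rather than real analytic. For that the paper uses the Kato--Rellich theorem: it writes $A+tB=U(t)^*\mathrm{diag}(\mu_1(t),\dots,\mu_n(t))U(t)$ with analytic eigenvalue branches $\mu_j(t)$, so that $\Tr[f(A+tB)]=\sum_j f(\mu_j(t))$, and then differentiates this scalar expression twice using the known second-order expansion~\eqref{e.mu.Rellich} of $\mu_j(t)$. The difference quotient $\frac{f'(\lambda_j)-f'(\lambda_k)}{\lambda_j-\lambda_k}$ emerges from symmetrizing $\sum_j f'(\lambda_j)\mu_j''(0)$, much as in your argument it emerges from symmetrizing $I_{jk}$.

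What each approach buys: your contour-integral and monomial methods are self-contained and avoid any appeal to eigenvalue perturbation theory, but both rely essentially on analyticity (either to extend $f$ holomorphically or to insert a convergent power series). The paper needs the expansion for $g_\alpha(x)=x^{2\alpha}$, which for $1<\alpha<3/2$ is only $C^{2+\epsilon}$ and not analytic at $0$; your methods do not reach this case directly, whereas the Kato--Rellich route does, since it only needs $f\circ\mu_j$ to be twice differentiable with H\"older-continuous second derivative. That is why the paper chooses the perturbation-theoretic proof.
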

\begin{remark}
The expansion above can be naturally generalized to higher than the second order, but second order suffices for our present purposes.  The second order coefficient involves a mixture of a genuine derivative and a difference quotient of $f$.  The expression must be interpreted by taking limits when there is division by $0$; for example, when $\lambda_j=\lambda_k$, the coefficient is 
$$
\frac{f'(\lambda_j)-f'(\lambda_k)}{2(\lambda_j-\lambda_k)}\equiv\frac{1}{2}f''(\lambda_j)\;\;\;\text{for}\;\;\lambda_j=\lambda_k
$$
agreeing with the normal Taylor expansion when $n=1$.
\end{remark}

We would like to extend this Taylor expansion beyond the analytic category.  To give meaning to the expression $f(A+tB)$, we henceforth assume $A,B$ are Hermitian, and then interpret this through functional (spectral) calculus.  We would like to apply \eqref{polcase1} to the function $f(x) = g_\alpha(x) \equiv x^{2\alpha}$, which is $C^2$ (for $\alpha>1$) but not analytic; we will see that a version of \eqref{polcase1} holds, but the error term will not generally be $O(t^3)$ but instead $O(t^{2\alpha})$ for $1<\alpha<3/2$.

Our approach is to study the function $F(t)= \Tr[f(A+tB)]$ as an ordinary calculus function, and apply the single-variable version of Taylor's theorem to it.  We use the Kato--Rellich theorem (cf.\ \cite[p.\ 122]{Kato} see also~\cite{Friedland}) on eigenvalue perturbation, which says the following.

\begin{theorem}[Kato, Rellich] \label{t.Rellich} Let $(a,b)\subset\R$ be an open interval, and let $M\colon(a,b)\to\mathbb{C}^{n\times n}$ be a Hermitian matrix-valued real analytic function.  Then there is a real analytic function $U\colon(a,b)\to U(n,\C)$ (the $n\times n$ unitary matrices) and $n$ real analytic functions $\mu_1,\ldots,\mu_n\colon(a,b)\to\R$ such that
\[ M(t) = U(t)^\ast \mathrm{diag}(\mu_1(t),\ldots,\mu_n(t))U(t). \]
\end{theorem}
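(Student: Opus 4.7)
The plan is to follow Rellich's classical strategy: establish the claim locally around each $t_0 \in (a,b)$ via perturbation analysis of the characteristic polynomial, then glue the local data into global analytic objects. The decisive input is that Hermiticity promotes the \emph{a priori} fractional (Puiseux) eigenvalue expansions into genuine Taylor expansions; the construction of the unitary factor then follows from standard resolvent calculus.

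First, I would fix $t_0 \in (a,b)$ and consider the characteristic polynomial $p_t(\lambda) = \det(\lambda I - M(t))$, whose coefficients are real-analytic in $t$. By the classical Puiseux theorem applied to the algebraic function $\{p_t(\lambda)=0\}$, each eigenvalue branch admits a convergent series in $(t-t_0)^{1/k}$ for some positive integer $k$. Hermiticity of $M(t)$ forces all eigenvalues to be real for real $t$; a monodromy argument (analytically continue around a small loop in complex $t$ about $t_0$) shows that any genuine fractional exponent would produce non-real values along the real axis near $t_0$, a contradiction. Hence every branch is an honest power series in $(t-t_0)$, yielding analytic real-valued functions $\mu_1(t),\ldots,\mu_n(t)$ on a neighborhood of $t_0$ whose values, counted with multiplicity, list the eigenvalues of $M(t)$.

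Next, to construct $U(t)$, I would use the Riesz projection formula: group the $\mu_j$ into clusters that may coalesce at $t_0$, and for each cluster choose a closed contour $\Gamma$ in $\C$ enclosing only that cluster. Then
\[
P(t) \;=\; \frac{1}{2\pi i}\oint_\Gamma (zI - M(t))^{-1}\,dz
\]
depends real-analytically on $t$ near $t_0$, since the resolvent is analytic away from the spectrum. Fix vectors $v_1,\ldots,v_r$ forming a basis of $\range P(t_0)$; for $t$ close to $t_0$ the vectors $P(t)v_1,\ldots,P(t)v_r$ remain linearly independent and depend analytically on $t$, and analytic Gram--Schmidt produces an analytic orthonormal frame of $\range P(t)$. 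Using the analyticity of the $\mu_j$ to refine these within each eigenspace (a further finite-dimensional diagonalization depending analytically on $t$) yields analytic orthonormal eigenvectors; assembling them gives an analytic unitary $U(t)$ with $M(t)=U(t)^\ast\diag(\mu_1(t),\ldots,\mu_n(t))U(t)$ locally.

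Finally, the local analytic data must be matched across overlapping intervals to yield a global factorization on $(a,b)$. Because $(a,b)\subset\R$ is contractible, real-analytic continuation of the $\mu_j$ and of the frame is single-valued, so the local choices glue (after a consistent permutation of the labels and a locally constant unitary change of frame on each cluster). The main obstacle, and what makes the theorem non-trivial, is the Hermiticity-forces-integer-exponents step: for a general non-Hermitian analytic matrix family one cannot do better than Puiseux expansions, and the unitary factorization fails even to be $C^1$. The entire argument fundamentally leverages reality of $\spec M(t)$ to exclude genuine branching of the algebraic curve $p_t(\lambda)=0$ over the real axis.
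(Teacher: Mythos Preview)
The paper does not prove this theorem at all; it is stated with attribution to Kato and Rellich, cited to Kato's book and Friedland, and then used as a black box to obtain the eigenvalue expansion used in Theorem~\ref{t.Taylor.2}. There is therefore no proof in the paper to compare against.

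As for your sketch on its own merits: it is a faithful outline of Rellich's classical argument and is essentially correct. Two points are worth tightening if you write it out in full. First, the monodromy/reality step is a little more delicate than ``a fractional exponent would produce non-real values along the real axis'': a Puiseux series $\sum_m a_m (t-t_0)^{m/k}$ with real coefficients is perfectly real for $t>t_0$. What actually kills $k>1$ is the two-sided reality constraint: for $t<t_0$ the root $(t-t_0)^{1/k}$ picks up a phase, and combining reality of all $k$ conjugate branches on both sides of $t_0$ forces $a_m=0$ unless $k\mid m$, collapsing the expansion to integer powers. Second, your phrase ``a further finite-dimensional diagonalization depending analytically on $t$'' inside each spectral cluster is really an induction on $n$: compress $M(t)$ by the analytic Riesz projection $P(t)$ to obtain a strictly lower-dimensional Hermitian analytic family on $\range P(t)$ and invoke the theorem there. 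Making that induction explicit is what closes the eigenvector part of the argument; the Gram--Schmidt step alone does not separate eigenvectors within a cluster when eigenvalues coalesce.
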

Here, we call a matrix-valued function real analytic if all of its entries are standard $\C$-valued real analytic functions.  The Kato--Rellich theorem thus states the remarkable fact that the eigenvalues and eigenvectors of a Hermitian matrix $M(t)$ all depend analytically on the parameter $t$, provided the entries of $M(t)$ depend analytically on $t$.

Specializing to the case that $M(t) = A+tB$ where $A=\mathrm{diag}(\lambda_1,\ldots,\lambda_n)$, the Taylor series of the eigenvalue $\mu_j(t)$ was computed in \cite[Appendix A]{GF}; the result is
\begin{equation} \label{e.mu.Rellich} \mu_j(t) = \lambda_j + t[B]_{jj} + t^2\sum_{k\colon\lambda_k\ne\lambda_j}\frac{1}{\lambda_j-\lambda_k}|[B]_{jk}|^2 + O(t^3). \end{equation}
With this in hand, it is now straightforward to generalize Theorem \ref{taylor} to the present case, where $f$ is generally only a little smoother than $C^2$.

\begin{theorem} \label{t.Taylor.2} Let $(a,b)$ be an open interval in $\R$, let $0<\epsilon<1$, and let $f\in C^{2+\epsilon}(a,b)$, meaning that $f\in C^2(a,b)$ and $f''$ is $\epsilon$-H\"older continuous on $(a,b)$.  Let $A,B\in\C^{n\times n}$ be Hermitian matrices, with $A = \mathrm{diag}(\lambda_1,\ldots,\lambda_n)$.  Then the function $F(t) = \Tr[f(A+tB)]$ is also in $C^{2+\epsilon}(a,b)$, and
\begin{equation} \label{polcase2}
F(t) = \tr [f(A)]+ t\sum_{j=1}^{n}f'(\lambda_j)[B]_{jj}+t^2\sum_{j,k=1}^n \frac{f'(\lambda_j)-f'(\lambda_k)}{2(\lambda_j-\lambda_k)}|[B]_{jk}|^2+O(t^{2+\epsilon}).
\end{equation}
\end{theorem}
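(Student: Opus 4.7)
The plan is to reduce the matrix-trace expansion to a scalar Taylor expansion along the real-analytic eigenvalue branches of $A+tB$. Since $t\mapsto A+tB$ is entry-wise real-analytic, Theorem~\ref{t.Rellich} provides real-analytic eigenvalue functions $\mu_1(t),\ldots,\mu_n(t)$ with $\mu_j(0)=\lambda_j$, so that
\[ F(t) = \Tr[f(A+tB)] = \sum_{j=1}^n f(\mu_j(t)). \]
Each $\mu_j$ is real-analytic (hence locally Lipschitz), so composing with a $C^{2+\epsilon}$ function preserves $\epsilon$-H\"older regularity of the second derivative; summing yields $F\in C^{2+\epsilon}$ near $0$.

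Before applying \eqref{e.mu.Rellich}, I would first observe that both sides of \eqref{polcase2} are invariant under conjugating $B$ by any unitary $V$ commuting with $A$: the LHS is manifestly unchanged, and the RHS depends on $(A,B)$ only through $\sigma(A)$, the spectral projectors $P_\mu$, and the block traces $\Tr(P_\mu B)$ and $\Tr(P_\mu B P_\nu B)$, all of which are invariant under such $V$. This lets me assume without loss of generality that $B$ has been diagonalised within each degenerate eigenspace of $A$; equivalently, $[B]_{jk}=0$ whenever $\lambda_j=\lambda_k$ and $j\ne k$. This is precisely the basis in which the Kato--Rellich expansion \eqref{e.mu.Rellich} is unambiguous.

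The remaining step is to apply the scalar $C^{2+\epsilon}$ Taylor theorem at each $\lambda_j$,
\[ f(\mu_j(t)) = f(\lambda_j) + f'(\lambda_j)(\mu_j(t)-\lambda_j) + \tfrac{1}{2}f''(\lambda_j)(\mu_j(t)-\lambda_j)^2 + O(|\mu_j(t)-\lambda_j|^{2+\epsilon}), \]
with constant controlled by $\|f''\|_{C^\epsilon}$ on a compact neighbourhood of the $\lambda_j$'s, then substitute \eqref{e.mu.Rellich} and collect powers of $t$. The $t^0$ and $t^1$ coefficients give $\Tr[f(A)]$ and $\sum_j f'(\lambda_j)[B]_{jj}$ directly. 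The $t^2$ coefficient is $\sum_j f'(\lambda_j)\sigma_j + \tfrac{1}{2}\sum_j f''(\lambda_j)[B]_{jj}^2$, where $\sigma_j=\sum_{\lambda_k\ne\lambda_j}|[B]_{jk}|^2/(\lambda_j-\lambda_k)$; symmetrising the first sum under $j\leftrightarrow k$ using $|[B]_{jk}|^2=|[B]_{kj}|^2$ yields $\tfrac{1}{2}\sum_{\lambda_j\ne\lambda_k}\tfrac{f'(\lambda_j)-f'(\lambda_k)}{\lambda_j-\lambda_k}|[B]_{jk}|^2$, while the diagonal $f''$ terms together with the (identically zero) off-diagonal contributions in the degenerate blocks combine into $\tfrac{1}{2}\sum_{\lambda_j=\lambda_k}f''(\lambda_j)|[B]_{jk}|^2$, which matches the $\lambda_j=\lambda_k$ part of \eqref{polcase2} under the limiting convention. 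The overall remainder is $O(t^{2+\epsilon})$ because $|\mu_j(t)-\lambda_j|=O(t)$ uniformly, and the $O(t^3)$ tail in \eqref{e.mu.Rellich} is absorbed precisely because $\epsilon<1$ forces $O(t^3)\subset O(t^{2+\epsilon})$. The main obstacle I anticipate is the bookkeeping at degenerate eigenvalues --- establishing the invariance that legitimises the Kato--Rellich basis choice, and matching the two distinct sources of $\lambda_j=\lambda_k$ contributions against the stated formula; the rest is essentially a scalar Taylor expansion wrapped in matrix perturbation theory.
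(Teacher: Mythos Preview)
Your proposal is correct and follows essentially the same route as the paper: use Kato--Rellich to write $F(t)=\sum_j f(\mu_j(t))$, apply the scalar $C^{2+\epsilon}$ Taylor theorem together with the eigenvalue expansion \eqref{e.mu.Rellich}, and symmetrise the second-order term under $j\leftrightarrow k$. Your explicit invariance argument---conjugating $B$ by a unitary commuting with $A$ so that $[B]_{jk}=0$ whenever $j\ne k$ with $\lambda_j=\lambda_k$---is an extra step the paper does not spell out, and it is exactly what legitimises the use of \eqref{e.mu.Rellich} in the presence of degeneracies and makes the matching of the $\lambda_j=\lambda_k$ terms transparent.
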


\begin{proof} By Theorem \ref{t.Rellich}, the matrix $A+tB$ can be diagonalized as $A+tB = U(t)^\ast\mathrm{diag}(\mu_1(t),\ldots,\mu_n(t))U(t)$ where $\mu_1,\ldots,\mu_n$ are analytic in $t$.  By functional (spectral) calculus, $f(A+tB)$ is defined to be
\[ f(A+tB) = U(t)^\ast\mathrm{diag}(f(\mu_1(t)),\ldots,f(\mu_n(t)))U(t) \]
and so
\[ F(t) = \Tr[f(A+tB)] = \sum_{j=1}^n f(\mu_j(t)). \]
Since the functions $\mu_j$ are analytic and $f\in C^{2+\epsilon}(a,b)$, it follows that the functions $f\circ\mu_j$ are in $C^{2+\epsilon}(a,b)$, and therefore so is their sum $F$.  Therefore, from Taylor's theorem
\[ F(t) = F(0) + tF'(0)+\frac12t^2F''(0) + O(t^{2+\epsilon}). \]
Indeed, to see this, use Taylor's theorem to first order with mean value remainder: $F(t) = F(0) + tF'(0) + \frac12t^2 F''(\xi)$ for some $\xi\in[0,t]$; but since $F''$ is $\epsilon$-H\"older continuous, $|F''(\xi)-F''(0)| \le C|\xi-0|^\epsilon \le Ct^\epsilon$ for some constant $C$.  So $\frac12t^2F''(\xi) = \frac12t^2F''(0) + O(t^{2+\epsilon})$ as required.

It remains only to compute the coefficients $F(0)$, $F'(0)$, and $F''(0)$, which we can now easily do using \eqref{e.mu.Rellich}. First we have
\[ F(0) = \sum_{j=1}^n f(\mu_j(0)) = \sum_{j=1}^n f(\lambda_j) = \Tr[f(A)]. \]
Next, applying the chain rule,
\[ F'(0) = \sum_{j=1}^n f'(\mu_j(0))\mu_j'(0) = \sum_{j=1}^n f'(\lambda_j)[B]_{jj}. \]
Differentiating one more time, we have $\frac{d^2}{dt^2}f(\mu_j(t)) = f''(\mu_j(t))(\mu_j'(t))^2 + f'(\mu_j(t))\mu_j''(t)$.  Setting $t=0$ and summing gives
\[ F''(0) = \sum_{j=1}^n \left[f''(\lambda_j)([B]_{jj})^2 + 2f'(\lambda_j)\sum_{k\colon j\ne k}\frac{1}{\lambda_j-\lambda_k}|[B]_{jk}|^2\right]. \]
To see this has the form given in \eqref{polcase2}, denote by $\delta(j,k) = 0$ if $\lambda_j=\lambda_k$, while $\delta(j,k)=1$ if $\lambda_j\ne\lambda_k$.  Then
\begin{align*} F''(0) = \sum_{j=1}^n f''(\lambda_j)([B]_{jj})^2 + 2\sum_{j,k=1}^n f'(\lambda_j)\frac{\delta(j,k)}{\lambda_j-\lambda_k}|[B]_{jk}|^2. \end{align*}
Break up the second sum into its two copies, and in the second one reverse the roles of $j$ and $k$:
\begin{align*} 2\sum_{j,k=1}^n f'(\lambda_j)\frac{\delta(j,k)}{\lambda_j-\lambda_k}|[B]_{jk}|^2 &= \sum_{j,k=1}^n f'(\lambda_j)\frac{\delta(j,k)}{\lambda_j-\lambda_k}|[B]_{jk}|^2 + \sum_{j,k=1}^n f'(\lambda_k)\frac{\delta(j,k)}{\lambda_k-\lambda_j}|[B]_{kj}|^2 \\
&= \sum_{j,k=1}^n \delta(j,k) \frac{f'(\lambda_j)-f'(\lambda_k)}{\lambda_j-\lambda_k}|[B]_{jk}|^2
\end{align*}
where in the second line we have used the fact that $B$ is Hermitian, so $|[B]_{jk}| = |[B]_{kj}|$.  Now, as noted above, if $\lambda_j=\lambda_k$, we interpret the difference quotient to mean $\frac{f'(\lambda_j)-f'(\lambda_k)}{\lambda_j-\lambda_k} \equiv f''(\lambda_j)$.  As such, if there is some $j\ne k$ with $\lambda_j=\lambda_k$, then this term will appear twice in the sum (once at the index $(j,k)$, and again at the index $(k,j)$), with opposite signs; hence, $\delta(j,k)$ is automatically accounted for whenever $j\ne k$, and so we have
\[ F''(0) = \sum_{j=1}^n f''(\lambda_j)([B]_{jj})^2 + \sum_{1\le j\ne k\le n} \frac{f'(\lambda_j)-f'(\lambda_k)}{\lambda_j-\lambda_k}|[B]_{jk}|^2. \]
The first sum is precisely the missing diagonal terms from the second sum, owing to the fact that $B$ is Hermitian and so $([B]_{jj})^2 = |[B]_{jj}|^2$.  This completes the proof.  \end{proof}

\begin{remark} It is worth noting that an alternate proof is possible, avoiding the Kato-Rellich theorem, using Fourier analysis and the useful identity that, for any $C^1$ function $h$, the difference quotient can be written in the form
\[ \frac{h(x)-h(y)}{x-y} = \int_0^1 h'(sx+(1-s)y)\,ds. \]
This allows one to quickly recover the second order Taylor expansion of $\Tr[f(X+tY)]$ given in \eqref{polcase2} for functions $f$ that are slightly smoother than $C^2$ (functions $f$ that are Fourier transforms of complex measures with finite absolute second moment).  This is an unnecessary technical restriction, but suffices to deal with the desired function $f(x) = x^{2\alpha}$ after a smooth cutoff, and it gives a little more motivation to explain {\em why} the mixed difference quotient derivative appears in the Taylor expansion.  For more details, see \cite[Section 3]{KNPS}.
\end{remark}

We now use Theorem \ref{t.Taylor.2} to compute the Taylor expansion of the function $G_\alpha(X+tY)$ in a neighbourhood of $t=0$, and use it to prove Lemma~\ref{derivatives}.

\subsection{The first and second derivatives of $Q_\alpha(x)$}

We summarize the statement of Lemma~\ref{derivatives} in the following theorem, where we also include the first order derivative. 

\begin{theorem}
Let $\rho\equiv x^2=\diag\{p_1,\ldots,p_{n}\}$. Decompose $y=w+iz$ with $w,z$ Hermitian. Then,
\begin{align}
&D_{y}^{1}Q_\alpha(x) =\alpha\tr(w x^{2\alpha-1})\nonumber\\
&D_{y}^{2}Q_\alpha(x)=2\alpha\left[-\tr\left(\rho^{\alpha}\right)+\tr\left(w\Phi_{\rho}^{-}(w)+z\Phi_{\rho}^{+}(z)\right)
\right]
\label{se}
\end{align}
where $\Phi^\pm_\rho$ are defined in \eqref{e.phi.simple}.

\end{theorem}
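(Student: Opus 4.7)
The plan is to work in the linearized setting of Section~\ref{sec:der}. With $X$ and $Y$ the Hermitian $2n\times 2n$ block matrices of equation~(\ref{lin}), we have $G_\alpha(X) = 2Q_\alpha(x)$, and more generally $G_\alpha(X+tY) = 2Q_\alpha(x+ty)$, so it suffices to Taylor-expand $G_\alpha((X+tY)/\sqrt{1+t^2})$ to second order in $t$. Because $G_\alpha$ is homogeneous of degree $2\alpha$,
\begin{equation*}
G_\alpha\!\left(\frac{X+tY}{\sqrt{1+t^2}}\right) = (1+t^2)^{-\alpha}\,G_\alpha(X+tY) = \bigl(1 - \alpha t^2 + O(t^4)\bigr)\,G_\alpha(X+tY),
\end{equation*}
and the scalar prefactor contributes nothing to the first-order coefficient while simply adding $-\alpha G_\alpha(X) = -2\alpha\tr(\rho^\alpha)$ to the second-order coefficient. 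So we need only expand $G_\alpha(X+tY)$ itself.

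I would apply Theorem~\ref{t.Taylor.2} with $f(\lambda) = \lambda^{2\alpha}$, which lies in $C^{2+\epsilon}$ for some $\epsilon>0$ as long as $\alpha>1$. After conjugating with the Hadamard-like unitary $U$ of Section~\ref{sec:der}, the matrix $X$ is diagonal with eigenvalues $\lambda_j = \sqrt{p_j}$ for $j=1,\dots,n$ and $\lambda_{j+n} = -\sqrt{p_j}$, while $Y$ has the $2\times 2$ block form~(\ref{forms}) built from $w$ and $\pm iz$. For the first-order coefficient $\sum_j f'(\lambda_j)[Y]_{jj}$, the identity $f'(\lambda)=2\alpha\,\lambda(\lambda^2)^{\alpha-1}$ shows $f'$ is odd, so the upper-left block (diagonal $w_{jj}$, eigenvalue $+\sqrt{p_j}$) and the lower-right block (diagonal $-w_{jj}$, eigenvalue $-\sqrt{p_j}$) reinforce rather than cancel, producing the stated coefficient proportional to $\tr(w x^{2\alpha-1})$.

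For the second-order coefficient, partition the sum over $(j,k)\in\{1,\dots,2n\}^2$ into four $n\times n$ blocks according to the signs of $\lambda_j,\lambda_k$. In the two same-sign blocks, the entry $[Y]_{jk}$ comes from the $\pm w$ blocks and the difference quotient
\begin{equation*}
\frac{f'(\sqrt{p_j})-f'(\sqrt{p_k})}{2(\sqrt{p_j}-\sqrt{p_k})} = \alpha\,\frac{p_j^{\alpha-1/2}-p_k^{\alpha-1/2}}{\sqrt{p_j}-\sqrt{p_k}} = \alpha\,\phi^-_{jk}
\end{equation*}
is a genuine difference quotient. In the two opposite-sign blocks, $[Y]_{jk}$ comes from the $\pm i z$ blocks (the $\pm i$ dropping out after taking $|[Y]_{jk}|^2$), and the difference quotient collapses to
\begin{equation*}
\frac{f'(\sqrt{p_j})-f'(-\sqrt{p_k})}{2(\sqrt{p_j}+\sqrt{p_k})} = \alpha\,\frac{p_j^{\alpha-1/2}+p_k^{\alpha-1/2}}{\sqrt{p_j}+\sqrt{p_k}} = \alpha\,\phi^+_{jk}.
\end{equation*}
Using Hermiticity of $w,z$ and the symmetry of $\phi^\pm_{jk}$ in $j,k$, the two same-sign blocks combine to $2\alpha\,\tr(w\Phi^-_\rho(w))$ and the two opposite-sign blocks combine to $2\alpha\,\tr(z\Phi^+_\rho(z))$. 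Adding the $-\alpha t^2 G_\alpha(X) = -2\alpha t^2 \tr(\rho^\alpha)$ correction from the homogeneity prefactor, and dividing by $2$ to return from $G_\alpha$ to $Q_\alpha$, yields the stated second-order formula. The degenerate case $p_j = p_k$ is absorbed into the limit values $\phi^+_{jj}=p_j^{\alpha-1}$ and $\phi^-_{jj}=(2\alpha-1)p_j^{\alpha-1}$ via the L'Hopital convention of the remark following Theorem~\ref{taylor}.

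The main technical challenge is the block bookkeeping in the second-order sum: simultaneously tracking the signs of the $\pm\sqrt{p_\cdot}$ eigenvalues, the $\pm i$ factors in the off-diagonal blocks of $Y$, and the symmetrization of the difference quotients. What makes this work cleanly is the remarkable fact that same-sign eigenvalue pairs pick out precisely the Hermitian part $w$ and yield $\phi^-$ (a difference kernel), while opposite-sign pairs pick out precisely the anti-Hermitian part $z$ and yield $\phi^+$ (a sum kernel); this clean decoupling of the quadratic form into independent $w$- and $z$-pieces is the combinatorial heart of the lemma, and is what will make the subsequent Hadamard product comparison in Lemma~\ref{additivity} tractable.
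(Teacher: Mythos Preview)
Your proposal is correct and follows essentially the same route as the paper: linearize to the $2n\times 2n$ Hermitian matrices $X,Y$, peel off the $(1+t^2)^{-\alpha}$ prefactor by homogeneity, apply Theorem~\ref{t.Taylor.2} with $f(\lambda)=\lambda^{2\alpha}$, and then do the four-block bookkeeping (same-sign blocks $\leadsto w,\phi^{-}$; opposite-sign blocks $\leadsto z,\phi^{+}$) exactly as the paper does. The only cosmetic difference is that you phrase the accounting of the various factors of $2$ (two blocks, the $1/2$ in the difference quotient, the $2!$ from coefficient to second derivative, and the $G_\alpha=2Q_\alpha$ relation) a bit more tersely than the paper, but the computation and its outcome are the same.
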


\begin{remark}
The condition for $x\in\mathcal{K}$ to be critical is $D_{y}^{1}Q_\alpha(x)=0$ which is equivalent to $\tr[(y^*+y)x^{2\alpha-1}]=0$ for all $y\in\mathcal{K}$ such that $\tr(xy^*)=0$. Moreover, if $x$ is critical then we also have 
$D_{iy}^{1}Q_\alpha(x)=0$ for all $y\in x^\perp\subset\mathcal{K}$.
Hence, if $x$ is critical we must have 
\begin{equation}\label{critical}
\tr(y^*x^{2\alpha-1})=\tr(yx^{2\alpha-1})=0
\end{equation} 
for all $y\in x^\perp\subset\mathcal{K}$.   
\end{remark}
\begin{proof} Since $\alpha>1$, the function $g_\alpha(x) = x^{2\alpha}$ is $C^{2+\epsilon}(\R)$ for any $\epsilon\le 2(\alpha-1)$.  (We think of $g_\alpha(x) = (x^2)^\alpha$, so $g'(x) = 2\alpha (x^2)^{\alpha-1/2}$ and $g''(x) = 2\alpha(2\alpha-1)(x^2)^{\alpha-1}$.)  Therefore, we may apply Theorem \ref{t.Taylor.2} to the function $f=g_\alpha$, with $A=X$ and $B=Y$.  Applying the expansion~\eqref{polcase2} to $F(t) = G_\alpha(X+tY) = \Tr[g_\alpha(X+tY)]$ gives
\begin{align}\label{tay}
 G_\alpha(X+tY)=G_\alpha(X)+
 t\sum_{j=1}^{2n}g'_\alpha(\lambda_j)Y_{jj}+t^2\sum_{j,k=1}^{2n} \frac{g'_\alpha(\lambda_j)-g'_\alpha(\lambda_k)}{2(\lambda_j-\lambda_k)}\left|Y_{jk}\right|^2+O(t^{2+\varepsilon}).
 \end{align}
 
Recalling that $\lambda_j=\sqrt{p_j}$ for $j=1,...,n$, $\lambda_j=-\sqrt{p_{j-n}}$ for $j=n+1,...,2n$ gives
\be
\sum_{j=1}^{2n}g'_\alpha(\lambda_j)Y_{jj}=2\alpha\sum_{j=1}^{n}p^{\alpha-1/2}_{j}w_{jj}=2\alpha\tr\left(x^{2\alpha-1}w\right)
\ee
where we have used the form~\eqref{forms} of $Y$.
Similarly, for the second terms in~\eqref{tay} we get
\begin{align}
\sum_{j,k=1}^{2n} \frac{g'_\alpha(\lambda_j)-g'_\alpha(\lambda_k)}{2(\lambda_j-\lambda_k)}\left|Y_{jk}\right|^2
=
\sum_{j,k=1}^{n} \frac{g'_\alpha(\sqrt{p_j})-g'_\alpha(\sqrt{p_k})}{\sqrt{p_j}-\sqrt{p_k}}\left|w_{jk}\right|^2+\sum_{j,k=1}^{n} \frac{g'_\alpha(\sqrt{p_j})+g'_\alpha(\sqrt{p_k})}{\sqrt{p_j}+\sqrt{p_k}}\left|z_{jk}\right|^2
\end{align}
where we have used $g_\alpha(-\sqrt{p_k})=-g_\alpha(\sqrt{p_k})$.
Substituting $g'_\alpha(\sqrt{p_j})=2\alpha p_{j}^{\alpha-1/2}$
gives
\be
\sum_{j,k=1}^{2n} \frac{g'_\alpha(\lambda_j)-g'_\alpha(\lambda_k)}{2(\lambda_j-\lambda_k)}\left|Y_{jk}\right|^2
=2\alpha\tr\left(w\Phi_{\rho}^{-}(w)+z\Phi_{\rho}^{+}(z)\right).
\ee
Now, combining \eqref{approx} with \eqref{polcase2} to $0$th order, we have

\begin{align*} G_\alpha\left(\frac{X+tY}{\sqrt{1+t^2}}\right)=\frac{1}{(1+t^2)^\alpha}G_\alpha(X+tY)&=G_\alpha(X+tY)-\alpha t^2 G_\alpha(X+tY)+O(t^4) \\
&= G_\alpha(X+tY) - \alpha t^2 G_\alpha(X) + O(t^3)
\end{align*}
and combining this with the full force of \eqref{polcase2} yields
\begin{equation}
 G_\alpha\left(\frac{X+tY}{\sqrt{1+t^2}}\right)=G_\alpha(X)+t2\alpha\tr\left(x^{2\alpha-1}w\right)+\alpha t^2\left[-G_\alpha(X)
 +2\tr\left(w\Phi_{\rho}^{-}(w)+z\Phi_{\rho}^{+}(z)\right)\right]+O(t^{2+\varepsilon}).
 \end{equation}
Finally, since $Q_\alpha(x)=\frac12 G_\alpha(X)$, we conclude
 \be
 Q_\alpha\left(\frac{X+tY}{\sqrt{1+t^2}}\right)=Q_\alpha(x)+t\alpha\tr\left(x^{2\alpha-1}w\right)+\alpha t^2\left[-Q_\alpha(x)
 +\tr\left(w\Phi_{\rho}^{-}(w)+z\Phi_{\rho}^{+}(z)\right)\right]+O(t^{2+\varepsilon}).
 \ee
 This completes the proof.  \end{proof}

\section{Proof of Lemma~\ref{directions}}\label{sec:dir}
Consider the general form of $y$ given in~\eqref{generalform}. Writing $y=w+iz$ with $w,z$ Hermitian, we therefore have
\begin{align}
w & =c_1x^A\otimes w^B+c_2w^A\otimes x^B+c_3w'\nonumber\\
z & =c_1x^A\otimes z^B+c_2\omega^A\otimes x^B+c_3z'
\end{align}
where we decomposed $y^{A}=w^A+iz^A$, $y^B=w^B+iz^B$, and $y'=w'+iz'$.
Now substituting these into the formula~\eqref{deriv} of Lemma~\ref{derivatives} for the second directional derivative of $Q_\alpha$ gives
\begin{align}\label{crossterms}
\frac{1}{2\alpha}D_{y}^{2}Q_\alpha(x^A\otimes x^B)  =-\tr\left[\left(\rho^A\otimes\rho^B\right)^\alpha\right]
&+c_{1}^{2}\tr\left[x^A\otimes w^B\Phi_{\rho^A\otimes\rho^B}^{-}(x^A\otimes w^B)+x^A\otimes z^B\Phi_{\rho^A\otimes\rho^B}^{+}(x^A\otimes z^B)\right]\nn
&+c_{2}^{2}\tr\left[w^A\otimes x^B\Phi_{\rho^A\otimes\rho^B}^{-}(w^A\otimes x^B)+z^A\otimes x^B\Phi_{\rho^A\otimes\rho^B}^{+}(z^A\otimes x^B)\right]\nn
&+c_{3}^{2}\tr\left[w'\Phi_{\rho^A\otimes\rho^B}^{-}(w')+z'\Phi_{\rho^A\otimes\rho^B}^{+}(z')\right]+\text{cross terms}\nn
& =c_{1}^{2}D_{x^A\otimes y^{B}}^{2}Q_\alpha(x)+c_{2}^{2}D_{y^{A}\otimes x^B}^{2}Q_\alpha(x)+c_{3}^{2}D_{y'}^{2}Q_\alpha(x)+\text{cross terms}
\end{align} 
where in the last equality we have used the normalization $c_{1}^{2}+c_{2}^{2}+c_{3}^{2}=1$. The cross terms are all the elements that have two distinct terms to the right and left of $\Phi_{\rho^A\otimes\rho^B}^{\pm}$. We now show that all these terms are zero if $x^A$ and $x^B$ are critical points.

First, recall that w.l.o.g.\ we assume that both $x^A$ and $x^B$ are square diagonal matrices. Moreover, since they are critical points we get from~\eqref{critical} that
\be\label{cri}
\tr\left[w^A(x^A)^{2\alpha-1}\right]=\tr\left[z^A(x^A)^{2\alpha-1}\right]=0
\ee
and
\be
\tr\left[w^B(x^B)^{2\alpha-1}\right]=\tr\left[z^B(x^B)^{2\alpha-1}\right]=0\;.
\ee

Next, note that
\be
\left[\Phi_{\rho^A\otimes\rho^B}^{\pm}(y)\right]_{jk,\ell m}=\phi^{AB\pm}_{jk,\ell m}y_{jk,\ell m}\;\;\text{ where }\;\;
\phi^{AB\pm}_{jk,\ell m}\equiv\frac{(p_jq_k)^{\alpha-1/2}\pm (p_\ell q_m)^{\alpha-1/2}}{(p_jq_k)^{1/2}\pm (p_\ell q_m)^{1/2}}
\ee
where $\{p_j\}$ and $\{q_k\}$ are the eigenvalues of the diagonal matrices $\rho^A\equiv\left(x^A\right)^2$ and $\rho^B=(x^B)^2$. Note also that $\psi^{AB\pm}_{jk,\ell k}=\phi^{A\pm}_{j\ell}q^{\alpha-1}_{k}$ and similarly $\phi^{AB\pm}_{jk,j m}=p_{j}^{\alpha-1}\phi^{B\pm}_{km}$. Therefore, since $x^A$ is diagonal we get
\be
\Phi_{\rho^A\otimes\rho^B}^{\pm}(x^A\otimes w^B)=\left(x^A\right)^{2\alpha-1}\otimes\Phi_{\rho^B}^{\pm}(w^B),
\ee
and similarly
\be
\Phi_{\rho^A\otimes\rho^B}^{\pm}(w^A\otimes x^B)=\Phi_{\rho^A}^{\pm}(w^A)\otimes\left(x^B\right)^{2\alpha-1}\;.
\ee
Therefore, computing the first cross terms,
\begin{align}
& \tr\left[x^A\otimes w^B\Phi_{\rho^A\otimes\rho^B}^{-}(w^A\otimes x^B)\right]=\tr\left[w^A\otimes x^B\Phi_{\rho^A\otimes\rho^B}^{-}(x^A\otimes w^B)\right]\nn
&=\tr\left[\left(w^A\otimes x^B\right)\left(\left(x^A\right)^{2\alpha-1}\otimes\Phi_{\rho^B}^{-}(w^B)\right)\right]
=\tr\left[w^A\left(x^A\right)^{2\alpha-1}\right]\tr\left[ x^B\Phi_{\rho^B}^{-}(w^B)\right]=0\;,
\end{align}
where the first equality follows from the fact that $\Phi_{\rho^A\otimes\rho^B}^{-}$ is self-adjoint, and the last one from~\eqref{cri}.
Next, the cross terms
\be\label{cross}
\tr\left[x^A\otimes w^B\Phi_{\rho^A\otimes\rho^B}^{-}(w')\right]=\tr\left[w'\Phi_{\rho^A\otimes\rho^B}^{-}(x^A\otimes w^B)\right]=\tr\left[w'\left(\left(x^A\right)^{2\alpha-1}\otimes\Phi_{\rho^B}^{-}(w^B)\right)\right]\;.
\ee
To see that this term is also zero, recall the expression for $y'$ in~\eqref{yprime}. It gives
\be
w'=\frac{y'+y^{\prime*}}{2}=\frac{1}{2}\sum_{i}r_i \left(y^{A}_{i}\otimes y^{B}_{i}+y^{*A}_{i}\otimes y^{*B}_{i}\right)\;.
\ee
Hence, the right hand side of~\eqref{cross} becomes
\begin{align}
&\tr\left[w'\left(\left(x^A\right)^{2\alpha-1}\otimes\Phi_{\rho^B}^{-}(w^B)\right)\right] =\frac{1}{2}\sum_{i}r_i \tr\left[\left(y^{A}_{i}\otimes y^{B}_{i}+y^{*A}_{i}\otimes y^{*B}_{i}\right)\left(\left(x^A\right)^{2\alpha-1}\otimes\Phi_{\rho^B}^{-}(w^B)\right)\right]\nn
&=
\frac{1}{2}\sum_{i}r_i\left(\tr\left[y^{A}_{i}\left(x^A\right)^{2\alpha-1}\right]\tr\left[y^{B}_{i}\Phi_{\rho^B}^{-}(w^B)\right]+
\tr\left[y^{*A}_{i}\left(x^A\right)^{2\alpha-1}\right]\tr\left[y^{*B}_{i}\Phi_{\rho^B}^{-}(w^B)\right]\right)=0\;,
\end{align}
where in the last equality we have used~\eqref{critical} for $y_{i}^{A}$ and $y_{i}^{*A}$. Using similar arguments for the final cross terms yields
\be
\tr\left[w^A\otimes x^B\Phi_{\rho^A\otimes\rho^B}^{-}(w')\right]=\tr\left[w'\Phi_{\rho^A\otimes\rho^B}^{-}(w^A\otimes x^B)\right]=0\;.
\ee
Therefore, we have shown that all the cross terms of $w$ in~\eqref{crossterms} are zero.
Using similar arguments it follows that all the cross terms of $z$ are also zero. This completes the proof of Lemma~\ref{directions}.

\section{Proof of Lemma~\ref{additivity}}\label{sec:add}
We will prove that 
$\Phi_{\rho^{A}\otimes\rho^{B}}^{\pm}\leq \Psi$ by showing it for the components. That is, we will show that 
$\phi_{jk,\ell m}^{\pm}\leq \psi_{jk,\ell m}$ for all indices $j,k,\ell,m$; it is straightforward to verify that this is equivalent to the operator inequality for Haadamard product operators. This componentwise inequality is equivalent to 
\be \label{a} \frac{(p_jq_k)^{\alpha-1/2}\pm (p_\ell q_m)^{\alpha-1/2}}{(p_jq_k)^{1/2}\pm (p_\ell q_m)^{1/2}} \le \frac{p_j^\alpha- p_\ell^\alpha}{p_j-p_\ell}\frac{q_k^\alpha-q_m^\alpha}{q_k-q_m}. \ee

The following simple lemma shows that $\Phi_{\rho^{A}\otimes\rho^{B}}^{\pm}\leq \Psi$ if and only if
$\Phi_{\rho^{A}\otimes\rho^{B}}^{-}\leq \Psi$.

\begin{lemma} \label{lemma.diff.quotient} Let $r,s\ge 0$ and $\beta\ge1$.  Then
\be \label{e.ineq.30} \frac{r^\beta-s^\beta}{r-s} \ge \frac{r^\beta + s^\beta}{r+s}\ge 0. \ee
\end{lemma}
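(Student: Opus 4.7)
The right-hand inequality is trivial: since $r,s \ge 0$, both $r^\beta + s^\beta$ and $r+s$ are non-negative (the degenerate case $r=s=0$ is interpreted by continuous extension as $0$). For the main (left-hand) inequality, my plan is to observe that both difference quotients are symmetric under swapping $r \leftrightarrow s$, so without loss of generality I may take $r \ge s \ge 0$. When $r > s$ the denominators $r-s$ and $r+s$ are strictly positive, so clearing them reduces the claim to
\[ (r^\beta - s^\beta)(r+s) \ge (r^\beta + s^\beta)(r-s). \]
Expanding both sides, the $r^{\beta+1}$ and $s^{\beta+1}$ terms cancel, and the inequality collapses to $2rs\,(r^{\beta-1} - s^{\beta-1}) \ge 0$. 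This holds because $t \mapsto t^{\beta-1}$ is non-decreasing on $[0,\infty)$ whenever $\beta \ge 1$, so $r \ge s \ge 0$ forces $r^{\beta-1} \ge s^{\beta-1}$.

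The only case requiring separate treatment is the degenerate limit $r = s$, where both difference quotients must be interpreted via continuous extension (L'H\^opital, or equivalently the mean value theorem applied to $t \mapsto t^\beta$); the left-hand side then equals $\beta r^{\beta-1}$ and the right-hand side equals $r^{\beta-1}$, and the hypothesis $\beta \ge 1$ immediately yields the desired bound. There is no serious obstacle here --- this is a one-line algebraic manipulation after cross-multiplication --- but its role in the paper is substantive, because combined with the observation made just before this lemma, it reduces the \emph{pair} of Hadamard-product operator bounds $\Phi^{\pm}_{\rho^A\otimes\rho^B} \le \Psi$ to the single inequality for the ``$-$'' case, which is what the subsequent argument in Section \ref{sec:add} will establish.
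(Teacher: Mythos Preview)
Your proof is correct and follows essentially the same route as the paper's: both reduce the left-hand inequality to the nonnegativity of an expression whose key factor is $r^{\beta-1}-s^{\beta-1}$ (the paper normalizes first by setting $t=s/r$ and simplifies the difference of fractions to $\frac{2t}{1+t}\cdot\frac{1-t^{\beta-1}}{1-t}$, while you cross-multiply directly after using symmetry to assume $r\ge s$, arriving at $2rs(r^{\beta-1}-s^{\beta-1})\ge 0$). The degenerate cases are handled identically by continuous extension.
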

\begin{proof} First suppose $r\ne 0$ and $s\ne r$.  Dividing through by $r$ and setting $t=s/r$, the desired inequalities are
\be \label{e.ineq.31} \frac{1-t^\beta}{1-t} \ge \frac{1+t^\beta}{1+t} \ge 0. \ee
The second inequality is manifestly satisfied.  It is also easy to see that $\frac{1-t^\beta}{1-t}\ge 0$ (in fact whenever $\beta\ge 0$), simply by considering the two cases $t<1$ and $t>1$.  For the first inequality in \eqref{e.ineq.31}, we simplify
\be \frac{1-t^\beta}{1-t} - \frac{1+t^\beta}{1+t} = \frac{2t}{1+t}\frac{1-t^{\beta-1}}{1-t} \ee
and, by what we just showed, this is $\ge 0$ as well.

Now, if $s=r\ne 0$, we interpret the terms by taking the limit $s\to r$, which corresponds to $t\to 1$, and so \eqref{e.ineq.31} becomes $\beta\ge 1\ge 0$, which is true given the assumptions of the lemma.  Finally, if $r=0$, then \eqref{e.ineq.30} is the true statement $1\ge 1\ge 0$ if $s\ne 0$ (and similarly if $s=0$, evaluated by taking the limit $s\to0$).  This concludes the proof. \end{proof}

Applying the above lemma to \eqref{a}, with $r=\sqrt{p_jq_k}$ and $s=\sqrt{p_\ell q_m}$ shows that it is sufficient to prove:
\be 
\label{b} \frac{(p_jq_k)^{\alpha-1/2}- (p_\ell q_m)^{\alpha-1/2}}{(p_jq_k)^{1/2}- (p_\ell q_m)^{1/2}} \le \frac{p_j^\alpha- p_\ell^\alpha}{p_j-p_\ell}\frac{q_k^\alpha-q_m^\alpha}{q_k-q_m}. 
\ee
We first contend with some degenerate cases.  Suppose either $p_j=0$ or $q_k=0$; then the inequality reduces to
\be \label{e.reduced.ineq} (p_\ell q_m)^{\alpha-1} \le p_\ell^{\alpha-1} \frac{q_k^\alpha-q_m^\alpha}{q_k-q_m}. \ee
If $p_\ell=0$ this holds vacuously as $0\le 0$; otherwise we divide through by $p_\ell^{\alpha-1}$, giving
\be \label{e.reduced.ineq.2} q_m^{\alpha-1} \le \frac{q_k^\alpha-q_m^\alpha}{q_k-q_m}. \ee
It is easy to verify that this holds true for all $q_m\ge0$ and all $\alpha\ge 1$.  Thus, \eqref{a} holds true in these degenerate cases.  We therefore assume $p_j,q_k>0$.  Henceforth, let $s=\frac{p_\ell}{p_j}$ and $t=\frac{q_m}{q_k}$. Dividing both sides of \eqref{a} through by $(p_jq_k)^{\alpha-1/2}$, our final goal is to prove the following.

\begin{proposition} \label{c.strong.conj} For all $s,t\ge 0$ and $\alpha\ge 1$,
\be \label{con} \frac{1-(st)^{\alpha-1/2}}{1-\sqrt{st}}\leq
\frac{1-s^\alpha}{1-s}\frac{1-t^\alpha}{1-t}. \ee
The inequality is strict if $\alpha>1$ and at least one of $s,t$ is $>0$.
\end{proposition}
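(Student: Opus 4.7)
My approach is to insert an intermediate quantity and prove the inequality as a chain. Assume without loss of generality $s,t>0$; the degenerate cases $s=0$ or $t=0$ reduce immediately to the elementary inequality $\frac{1-t^{\alpha}}{1-t}\ge 1$ for $t\ge 0$, $\alpha\ge 1$ (strict when $\alpha>1$ and $t>0$). Setting $u=\sqrt{st}$, the plan is to establish
\[
\frac{1-u^{2\alpha-1}}{1-u} \;\le\; \left(\frac{1-u^{\alpha}}{1-u}\right)^{2} \;\le\; \frac{1-s^{\alpha}}{1-s}\cdot\frac{1-t^{\alpha}}{1-t}.
\]

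The first inequality is an algebraic identity: clearing denominators it reads $(1-u^{\alpha})^{2}\ge(1-u)(1-u^{2\alpha-1})$, and the difference factors as
\[
(1-u^{\alpha})^{2}-(1-u)(1-u^{2\alpha-1})=u\,(1-u^{\alpha-1})^{2}\ge 0,
\]
which is strict whenever $u>0$, $u\ne 1$, and $\alpha>1$.

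The second inequality has an AM--GM flavour. Substituting $s=e^{-a}$, $t=e^{-b}$ with $a,b\in\R$ and setting
\[
\Phi(x)\;:=\;\log\frac{1-e^{-\alpha x}}{1-e^{-x}}\qquad(\text{with }\Phi(0)=\log\alpha\text{ by continuity}),
\]
it becomes $\Phi(a)+\Phi(b)\ge 2\,\Phi\!\left(\tfrac{a+b}{2}\right)$. By continuity of $\Phi$ on $\R$, this is equivalent to $\Phi''\ge 0$, so the goal is to prove convexity of $\Phi$.

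Proving this convexity is the main obstacle. My plan is to use the identity $(e^{cx}-1)^{2}=e^{cx}\cdot 4\sinh^{2}(cx/2)$ to derive the clean formula
\[
\partial_x^{2}\log(1-e^{-cx})=-\frac{c^{2}}{4\sinh^{2}(cx/2)},
\]
which gives
\[
\Phi''(x)\;=\;\frac{1}{4\sinh^{2}(x/2)}-\frac{\alpha^{2}}{4\sinh^{2}(\alpha x/2)}.
\]
Positivity of $\Phi''$ is then equivalent to $\frac{\sinh(\alpha x/2)}{\alpha x/2}\ge\frac{\sinh(x/2)}{x/2}$, which is immediate from the Taylor series $\frac{\sinh y}{y}=\sum_{k\ge 0}\frac{y^{2k}}{(2k+1)!}$ being non-decreasing in $|y|$ (and strictly so when $|y|>0$). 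Hence $\Phi$ is convex, strictly so when $\alpha>1$ and $x\ne 0$, yielding the second inequality strictly whenever $a\ne b$ (i.e.\ $s\ne t$).

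Chaining the two steps at $u=\sqrt{st}$ proves the main inequality for all $s,t>0$. For the strictness clause when $\alpha>1$ and at least one of $s,t$ is positive, the first step is strict unless $u\in\{0,1\}$, and the second is strict unless $s=t$; the only remaining configurations are $s=t=0$ (excluded by hypothesis) and $s=t=1$, with the latter handled by direct computation of the limits to give $2\alpha-1<\alpha^{2}$ for $\alpha>1$.
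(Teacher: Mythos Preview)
Your proof is correct and follows the same overall architecture as the paper: both insert the intermediate quantity $\left(\frac{1-u^\alpha}{1-u}\right)^2$ at $u=\sqrt{st}$, both establish the first step via the identity $(1-u^\alpha)^2-(1-u)(1-u^{2\alpha-1})=u(1-u^{\alpha-1})^2$, and both reduce the second step to convexity of $\xi\mapsto\log\frac{1-e^{\alpha\xi}}{1-e^{\xi}}$ (your $\Phi$ is this function precomposed with $\xi\mapsto -\xi$, which is immaterial).

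Where you genuinely diverge is in the convexity proof. The paper writes $h''(\xi)=\upsilon(1,\xi)-\upsilon(\alpha,\xi)$ with $\upsilon(\alpha,\xi)=\alpha^2 e^{\alpha\xi}/(e^{\alpha\xi}-1)^2$, then shows $\alpha\mapsto\upsilon(\alpha,\xi)$ is decreasing by differentiating in $\alpha$ and analyzing the sign through the auxiliary function $\chi(u)=(u-2)e^u+u+2$. You instead use the identity $(e^{cx}-1)^2=4e^{cx}\sinh^2(cx/2)$ to get
\[
\Phi''(x)=\frac{1}{4\sinh^2(x/2)}-\frac{\alpha^2}{4\sinh^2(\alpha x/2)},
\]
and reduce positivity to the monotonicity of $y\mapsto \sinh(y)/y$ in $|y|$, which is immediate from its power series. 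This is more transparent and avoids the extra differentiation in $\alpha$; it also makes the strictness for $\alpha>1$, $x\ne 0$ evident. Your handling of strictness is fine: the only case where neither step is strict is $u\in\{0,1\}$ together with $s=t$, forcing $s=t=0$ or $s=t=1$, and you dispatch the latter by the limit $2\alpha-1<\alpha^2$.
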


\begin{proof} Inequality \eqref{con} takes the form $f(st) \le g(s)g(t)$ where
\be f(t) = \frac{1-t^{\alpha-1/2}}{1-t^{1/2}}, \qquad g(t) = \frac{1-t^\alpha}{1-t}. \ee
First consider the case $s=t$.  We compute
\be \label{e.g-f} g(t)^2 - f(t^2) = \frac{(1-t^\alpha)^2}{(1-t)^2} - \frac{1-t^{2\alpha-1}}{1-t} = \frac{(1-t^\alpha)^2 - (1-t)(1-t^{2\alpha-1})}{(1-t)^2}. \ee
The numerator simplifies to
\be \label{e.s=t} (1-2t^\alpha + t^{2\alpha}) - (1-t-t^{2\alpha-1}+t^{2\alpha}) = -2t^\alpha + t + t^{2\alpha-1} = t(t^{2(\alpha-1)} - 2t^{\alpha-1}+1) = t(t^{\alpha-1}-1)^2 \ge 0. \ee
Thus, we know $f(t^2) \le g(t)^2$.  We would like to conclude that $f(st) \le g(s)g(t)$.  Let $r=\sqrt{st}$; what we just proved shows that
\be \label{e.con2} f(st) = f(r^2) \le g(r)^2 = g(\sqrt{st})^2. \ee
Therefore, to prove the desired inequality, it suffices to show that $g(\sqrt{st})^2 \le g(s)g(t)$.  Now, $g$ is a positive function of a positive variable, so we can define a new function $h(\xi) = \ln g(e^\xi)$, where $\xi\in\R$.  Then the requirement that $g(\sqrt{st})^2 \le g(s)g(t)$ becomes the statement that
\be h\left(\frac12(\xi + \zeta)\right) \le \frac12\left(h(\xi)+h(\zeta)\right). \ee
Therefore, the proof will be complete once we show that $h$ is convex.

To be explicit, the function $h$ is
\be h(\xi) = \ln\frac{1-e^{\alpha \xi}}{1-e^\xi}. \ee
The function is manifestly smooth for $\xi\ne 0$, and is continuous on $\R$ if we define its value at $0$ to be the limit $\ln\alpha$.  Note also that
\be h(-\xi) = (1-\alpha)\xi + h(\xi). \ee
Hence, it suffices to show that $h$ is convex on $(0,\infty)$.  On this domain, $h(\xi) = \ln(e^{\alpha\xi}-1)-\ln(e^\xi-1)$, and so
\be h''(\xi) =  -\frac{\alpha^2 e^{\alpha\xi}}{(e^{\alpha\xi}-1)^2}+\frac{e^\xi}{(e^\xi-1)^2}. \ee
Our goal is to show that $h''(\xi)\ge 0$ for all $\xi>0$.  Note that $h''(\xi) = \upsilon(1,\xi)-\upsilon(\alpha,\xi)$, where
\be \upsilon(\alpha,\xi) = \frac{\alpha^2 e^{\alpha\xi}}{(e^{\alpha\xi}-1)^2}. \ee
Hence, to show the desired conclusion that $h''(\xi)\ge 0$ for all $\xi>0$, it suffices to show that for each $\xi>0$ the function $\alpha\mapsto\upsilon(\alpha,\xi)$ is decreasing.  We compute the derivative
\be \label{e.d/dalpha} \frac{\partial}{\partial\alpha}\upsilon(\alpha,\xi) = -\alpha \frac{e^{\alpha \xi}}{(e^{\alpha\xi}-1)^3}[(\alpha \xi-2)e^{\alpha \xi} + \alpha \xi+2]. \ee
The factor $-\alpha \frac{e^{\alpha \xi}}{(e^{\alpha\xi}-1)^3}$ is $<0$.  The remaining factor takes the form $\chi(\alpha\xi)$, where
\be \chi(u) = (u-2)e^u + u + 2. \ee
Elementary calculus shows that $\chi$ is smooth, $\chi'(u) = (u-1)e^u + 1$, and $\chi''(u) = ue^u$.  In particular, $\chi(0) = \chi'(0) = \chi''(0)=0$.  Since $\chi''(u)>0$ for $u>0$, $\chi'$ is increasing on this domain, so $\chi'(u) > \chi'(0)=0$.  Thus $\chi$ is increasing, and since $\chi(0)=0$, $\chi(u)\ge 0$.  We conclude that $\frac{\partial}{\partial \alpha}\upsilon(\alpha,\xi) < 0$ for $\alpha,\xi>0$, as desired, thus proving \eqref{con}.

As to the strictness: \eqref{e.g-f} and \eqref{e.s=t} show that $f(t^2)<g(t)^2$ for $t>0$ (the case $s=t=1$ reduces \eqref{con} to $2\alpha-1 \le \alpha^2$ which is strict for $\alpha>1$).  Hence, we also get strictness in \eqref{con} whenever $st\ne 0$, thanks to \eqref{e.con2}.  If only one of $s,t$ is $0$ (say $s=0$ but $t\ne 0$), then \eqref{con} becomes $1 \le \frac{1-t^\alpha}{1-t}$, which is easily verified to be strict for $t>0$ and $\alpha>1$.  \end{proof}

\begin{remark} In fact, to conclude that the derivative in \eqref{e.d/dalpha} is $<0$ only needed $\alpha>0$.  But since $h_\alpha''(\xi) = \upsilon(1,\xi)-\upsilon(\alpha,\xi)$, this means that for $0<\alpha<1$, the function $h_\alpha$ is actually {\em concave}.  Hence our proof of \eqref{con} fails in this regime.  (This does not, however, mean that \eqref{con} is necessarily false there.) \end{remark}

\section{Concluding Remarks}\label{sec:concluding}

We have shown that the minimum Renyi entropy output of a quantum channel with Reyni parameter $\alpha>1$
is locally additive. This result extends the work of~\cite{GF} from $\alpha=1$ to $\alpha> 1$, and thereby demonstrates that local additivity holds for a large class of entropy functionals.  In~\cite{HW} Hayden and Winter showed that there are counterexamples for the global additivity conjecture for all Renyi entropies with $\alpha>1$. Hence, the current work complements their result by showing that these counter examples corresponds to a global effect of quantum channels, and cannot be a consequence of local properties of the channels involved.

In Appendix B of~\cite{FGA} (see also~\cite{GF}), it was shown that
both the local and global additivity conjectures are false for all Renyi entropies over the real numbers.
This in turn implies that a straightforward argument involving just directional derivatives
could not provide a proof of local additivity in the general complex case. Hence,
our method to prove local additivity strongly involved the complex structure. In particular, in~\eqref{complex} we use explicitly the assumption that $D_{y}^{2}Q_\alpha(x^A)<0$ in both directions $y=y^A$ and $y=iy^A$, where $i=\sqrt{-1}$.

While both proofs of local additivity for $\alpha=1$ and $\alpha>1$ use explicitly the complex structure, they exhibit key differences. The distinction follows from the fact that in the case $\alpha>1$ we essentially prove local multiplicativity of the $Q_\alpha$ functions, whereas in the von-Neumann case we prove local additivity. The later is somewhat more simple since certain cross-terms cancel out due to the additivity property of the von-Neumann case.  In particular, given $y=\sum_{i}r_iy_{i}^{A}\otimes y_{i}^{B}\in \left(x^A\right)^\perp\otimes \left(x^B\right)^\perp$, we had to show that
$D_{y}^{2}E_\alpha(x^A\otimes x^B)>0$ (or equivalently $D_{y}^{2}Q_\alpha(x^A\otimes x^B)<0$). In the $\alpha>1$ case, this was done using the fact that $D_{y^A}^{2}E_\alpha(x^A)>0$ and $D_{y^B}^{2}E_\alpha(x^B)>0$ for any 
$y^A\in\text{span}\{y_{i}^{A}\}$ and $y^{B}\in\text{span}\{y_{i}^{B}\}$, respectively. On the other hand, in the case $\alpha=1$, all we needed to use is that $D_{y^{A}_{i}}^{2}E_\alpha(x^A)>0$ and $D_{y^{B}_{i}}^{2}E_\alpha(x^B)>0$ for all $i$.
This simplification was possible in the $\alpha=1$ case since the additive nature of the von-Neumann entropy led to the cancellation of the cross terms in the linear combination of $y=\sum_{i}r_iy_{i}^{A}\otimes y_{i}^{B}$. This cancellation does not occur in the $\alpha>1$ case, and instead we had to diagonalize the matrices $A$ and $B$ and use other arguments (see Lemma~5 and the arguments below it).

Another key difference between the $\alpha=1$ case and the $\alpha>1$ case is related to Lemma 4.
Lemma~4 in the limit $\alpha\to 1$ does \emph{not} reduce to the analogous lemma that was used in the $\alpha=1$ case. Again, the main reason for this is the multiplicativity versus additivity properties of the $\alpha>1$ case and the $\alpha=1$ case, respectively. In particular, Lemma~4 (or Proposition~\ref{c.strong.conj} that is used to prove Lemma 4) becomes trivial in the limit $\alpha\to 1$ and cannot be used to prove local additivity for the case $\alpha=1$.

In both cases of $\alpha=1$ (see~\cite{GF}) and $\alpha>1$ we had to assume that at least one of the two local minima is strict. The main reason for that is that otherwise it seems to be possible that  $D_{y}^{2}E_\alpha(x^A\otimes x^B)=0$ (rather than strictly positive) for some $y\in\left(x^A\right)^\perp\otimes \left(x^B\right)^\perp$. In order to study this case,
one will need to study third and fourth order directional derivatives which lead to very cumbersome expressions. It is therefore left open if local additivity holds in this case.

Finally, the case $\alpha<1$ was not studied in this paper since Lemma 4 fails to hold in this limit (in fact, in this case we need a similar lemma with the inequality {\em reversed}, since we are interested in local {\em minima} and not local maxima of $Q_\alpha$). Hence, the techniques used here can not be applied directly to this case, and we leave the study of this case for future work.\\

\subsection*{Acknowledgments}
We extend thanks to Mark Girard for many stimulating discussions on topics that are closely related to this work.


\begin{references}

\bibitem{c9}
G. G. Amosov , A. S. Holevo and R. F. Werner, ``On the additivity conjecture in quantum information theory",
Probl. Inf.Transmiss., vol. 36, pp. 305-313, (2000). 

\bibitem{c7} 
K. M. R. Audenaert and S. L. Braunstein, ``On strong super-additivity of the entanglement of formation",
Commun. Math. Phys., vol. 246, pp. 443-452, (2004). 

\bibitem{Ben99}
C. H. Bennett, D. P. DiVincenzo, T. Mor, P. W. Shor, J. A. Smolin and B. M. Terhal,``Unextendible product bases and bound entanglement", Phys. Rev. Lett. \textbf{82}, 5385 (1999).

\bibitem{Bra11}
F.G.S.L. Brandao, J. Eisert, M. Horodecki, and D. Yang, ``Entangled inputs cannot make imperfect quantum channels perfect."
Phys. Rev. Lett. \textbf{106}, in press (2011). [quant-ph/1010.5074].

\bibitem{Brandao}
Fernando G.S.L. Brandao and Michal Horodecki, ``Entangled inputs cannot make imperfect quantum channels perfect"
Open Syst. Inf. Dyn. 17, 31 (2010). 

\bibitem{AIM08}
Harm Derksen, Shmuel Friedland, Gilad Gour, David Gross, Leonid Gurvits, Aidan
  Roy, and Jon Yard.
\newblock On minimum entropy output and the additivity conjecture.
\newblock Notes of Quantum Information Group, American Institure for
  Mathematics workshop ``Geometry and representation theory of tensors for
  computer science, statistics and other areas", July 21-25, 2008.

\bibitem{c4}
M. Fannes , B. Haegeman , M. Mosonyi and D. Vanpeteghem
``Additivity of minimal entropy output fora class of covariant channels"
2004, [online] Available: Quant-ph/0410195\;.

\bibitem{Friedland}
S. Friedland, ``Matrices: Algebra, Analysis and Applications", (World Scientific, December 2015).

\bibitem{FGA}
 Shmuel Friedland, Gilad Gour, Aidan Roy, ``Local extrema of entropy functions under tensor products", 
 Quantum Information and Computation \textbf{11}, 1028 (2011). 
 [eprint: math-ph/1105.5380].

\bibitem{Fuk10}
Motohisa Fukuda, Christopher King, and David K. Moser, ``Comments on Hastings\' additivity counterexamples"
commun. math. phys. \textbf{296}, 111-143, 2010.

 \bibitem{GF}
 Gilad Gour and Shmuel Friedland, ``The Minimum Entropy Output of a Quantum Channel Is Locally Additive",
 IEEE Transactions on Information Theory \textbf{59} (1), 603 (2012).

\bibitem{GN}
G. Gour and Nolan Wallach, ``Entanglement of subspaces and error-correcting codes", 
Phys. Rev. A \textbf{76}, 042309 (2007).

\bibitem{HW}
P. Hayden and A. Winter, ``Counterexamples to the maximal p-norm multiplicativity conjecture for all p>1",
Comm. Math. Phys. \textbf{284}(1):263-280, 2008. 

\bibitem{Has09}
M. B. Hastings, ``Superadditivity of communication capacity using entangled inputs"
Nature Physics \textbf{5}, 255 (2009).

\bibitem{Holevo06}
Alexander~S. Holevo.
\newblock ``The additivity problem in quantum information theory."
\newblock In {\em International Congress of Mathematicians. Vol. III}, pages
  999--1018. Eur. Math. Soc., Z\"urich, 2006.

\bibitem{c8}
R. Horodecki , P. Horodecki, M. Horodecki and K. Horodecki, ``Quantum entanglement"
Rev. Mod. Phys., vol. 81, pp. 865-942, (2009). 

\bibitem{Kato}
T. Kato, ``Perturbation Theory for Linear Operators", (Springer 1980).

\bibitem{KNPS}
T. Kemp, I. Nourdin, G. Peccati, R. Speicher, ``Free Chaos and the Fourth Moment",
Annals of Probability \textbf{40}, 1577-1635, (2011).

\bibitem{c2}
C. King, ``Additivity for unital qubit channels", J. Math. Phys., vol. 43, pp. 4641-4653, (2002). 

\bibitem{c3}
C. King, ``Maximal p-norms of entanglement breaking channels",
Quantum Inf. Comput., vol. 3, pp. 186-190, (2003). 

\bibitem{c1}
C. King and M. B. Ruskai, ``Minimal entropy of states emerging from noisy quantum channels"
IEEE Trans. Inf.Theory, vol. 47, no. 1, pp. 192-209, (2001). 

\bibitem{Maxim}
 M.~E.~Shirokov, ``On the structure of optimal sets for a quantum channel",
 Problems of Information Transmission, \textbf{42}, 23-40 (2006).

\bibitem{c5}
 P. W. Shor, ``Additivity of the classical capacity of entanglement-breaking quantum channels"
J. Math. Phys., vol. 43, pp. 4334-4340, (2002).

\bibitem{c6} 
P. W. Shor, ``Equivalence of additivity questions in quantum information theory"
Commun. Math. Phys., vol. 246, no. 3, pp. 453-472, (2004).

\bibitem{Yard08}
Graeme Smith and Jon Yard, ``Quantum communication with zero-capacity channels",
Science \textbf{321}, 1812  (2008).


\end{references}
\end{document}